\documentclass[conference,letterpaper]{IEEEtran}

\IEEEoverridecommandlockouts

\usepackage{cite}
\usepackage{amsmath,amssymb,amsfonts}
\usepackage{graphicx}
\usepackage{textcomp}
\usepackage{xcolor}
\usepackage{algorithm}
\usepackage{algpseudocode}
\usepackage{amsthm}
\usepackage{bbm}
\usepackage{caption}
\usepackage{comment}
\usepackage{diagbox}
\usepackage{etoolbox}
\usepackage{float}
\usepackage{paralist}
\usepackage{import}
\usepackage{subcaption}
\usepackage{tabularx}
\usepackage{todonotes}
\usepackage{stmaryrd}
\usepackage{balance}
\usepackage{url}
\usepackage[utf8]{inputenc}

\newtheorem{theorem}{Theorem}%
\newtheorem{lemma}[theorem]{Lemma}%
\newtheorem{definition}[theorem]{Definition}%

\def\BibTeX{{\rm B\kern-.05em{\sc i\kern-.025em b}\kern-.08em
    T\kern-.1667em\lower.7ex\hbox{E}\kern-.125emX}}
\begin{document}

\title{Secure Integrated Sensing and Communication\\ against Communication and Sensing Eavesdropping
}

\author{
\IEEEauthorblockN{Sidong Guo and Matthieu R. Bloch}
\IEEEauthorblockA{School of Electrical and Computer Engineering 
Georgia Institute of Technology 
Atlanta, GA 30332\\
Email: sguo93@gatech.edu, matthieu@gatech.edu}
}

\maketitle

\makeatletter
\def\th@plain{%
  \thm@notefont{}
  \normalfont
}
\makeatother
\newtheoremstyle{myremark}{}{}{\itshape}{}{\itshape}{:}{ }{}
\theoremstyle{myremark}

\setlength{\abovedisplayskip}{4pt}
\setlength{\belowdisplayskip}{4pt}
\setlength{\abovedisplayshortskip}{2pt}
\setlength{\belowdisplayshortskip}{2pt}

\begin{abstract}
Sensing privacy and communication confidentiality play fundamentally different but interconnected roles in adversarial wireless environments.
Capturing this interplay within a single physical-layer framework is particularly challenging in integrated sensing and communication (ISAC) systems, where the same waveform simultaneously serves dual purposes. We study a secure ISAC system in which a monostatic transmitter simultaneously sends a confidential message to a legitimate receiver and senses an environmental state, while a passive adversary attempts both message decoding and state estimation. We partially characterize the fundamental trade-offs among three performance measures: the transmitter’s secrecy rate, its detection exponent, and the adversary’s detection exponent. Beyond the joint input distribution that governs overall performance, the trade-offs are further shaped by the transmitter’s ability to extract keys via feedback and hide both the \textit{content} and \textit{structure} of the codewords via wiretap and resolvability codes. We derive an achievable region, and illustrate the resulting design trade-offs through a numerical example.
\end{abstract}

\begin{IEEEkeywords}
Integrated sensing and communication (ISAC), physical-layer security, sensing security, detection exponent, resolvability
\end{IEEEkeywords}

\section{Introduction}\label{sec:introduction}
The integration of sensing and communication (ISAC) envisioned for 6G~\cite{wei2022toward,zhang2021physical} introduces security challenges absent in traditional separated architectures. Because a single waveform simultaneously conveys messages and encodes information about the environment through its structure, an adversary may exploit communication signals for message eavesdropping and sensing inference. 
Secure ISAC has been studied from a communication-theoretic perspective, with a primary focus on waveform optimization to maximize secrecy rate under various detection constraints \cite{ren2023robust,chu2022joint,he2024joint}. In parallel, a number of information-theoretic studies have emerged. Notably, \cite{gunlu2022secure} extends the i.i.d. ISAC model of \cite{ahmadipour_information-theoretic_2022,ahmadipour2023information} to incorporate a passive adversary whose state is to be estimated. The resulting secrecy–distortion region is characterized by leveraging results from wiretap channels \cite{bloch2021overview,bassi2018wiretap} and secret-key generation \cite{hayashi2016secret}. Nevertheless, despite repeated mentions in the literature for modeling and theoretical results that address \textit{joint} communication and sensing eavesdropping, such work remains scarce \cite{su2025integrating,aman2025integrating,ren2024secure}. 

The present work develops an information-theoretic framework for secure ISAC in the presence of an adversary with \textit{dual} eavesdropping objectives. Unlike the i.i.d. models commonly studied in prior work \cite{ahmadipour_information-theoretic_2022,ahmadipour2023information,gunlu2022secure}, we consider a \textit{fixed-state} model that enables both the transmitter and adversary to learn via sequential decision-making \cite{chang_sequential_2023,chang2023rate}. In realistic settings, the legitimate receiver’s and eavesdropper’s channels are often correlated, implying that the adversary can generally exploit transmitter adaptation for state inference. Consequently, we characterize the transmitter’s ability to \textit{slow down} the adversary’s state detection, quantified by the asymptotic detection exponent when the state remains constant \cite{chang2021evasive}.

The tradeoff between communication and sensing often amounts to characterizing an input distribution that simultaneously governs reliability and detection performance \cite{ahmadipour_information-theoretic_2022,chang2023rate}. The addition of security and the ability to extract secret keys accentuates this tension. On one hand, slowing the adversary’s ability to sense requires the transmitter to induce unfavorable output statistics at the adversary \cite{han2002approximation}. On the other hand, maintaining confidentiality in adverse channel conditions also relies on the judicious use of keys. To characterize these intertwined objectives, we draw on several established tools from literature: sequential hypothesis testing \cite{chang2021evasive,chernoff1992sequential,nitinawarat2013controlled,nitinawarat2015controlled}, compound wiretap channels with feedback \cite{bjelakovic2013secrecy,ahlswede2006transmission,ahlswede2002common,bassi2018wiretap,liang2009compound,ardestanizadeh2009wiretap}, and channel resolvability \cite{han2002approximation,bloch2016covert,bloch2013strong}. 

Our key conceptual contribution is to highlight a fundamental distinction between information-theoretic secrecy and sensing security in wireless systems: achieving the latter requires obscuring not only the content of the transmitted codewords but also their structural dependence on the underlying state.  Within this framework, our specific contributions are as follows:
\begin{inparaenum}[1)]
\item We show that the sensing-security-optimal operating point \(P_{\textnormal{SO}}\) is asymptotically achievable by appropriately designing state-dependent codeword types via resolvability coding; 
\item We demonstrate that a higher secrecy rate can be achieved in the resolvability regime \(P_{\textnormal{SC}}\) by selecting codeword types that damage Eve more than legitimate parties, at the cost of increased sensing leakage; this performance is ultimately limited by the soft-covering rate~\cite{yagli2019exact};
\item We derive the achievable secrecy rate in the non-resolvability regime \(P_{\textnormal{CO}}\).
\end{inparaenum}

\section{Notation}\label{sec:Notation}
Let \(X\) be a random variable taking values in a finite alphabet \(\mathcal{X}\). For \(n\in\mathbb{N}\), we write \(X^n\triangleq (X_1,\ldots,X_n)\in\mathcal{X}^n\), and for integers \(1\le i\le j\le n\), we denote the subsequence by \(X_{i:j}\triangleq (X_i,\ldots,X_j)\). For a deterministic sequence \(\mathbf{x}\in\mathcal{X}^n\), its type is \(P_{\mathbf{x}}(x)\triangleq \frac{1}{n}\sum_{t=1}^n \mathbbm{1}\{x_t=x\}\), and \(\mathcal{T}_{P}^n\) denotes the type class associated with a probability mass function (pmf) \(P\)~\cite{thomas2006elements}.  For conditional pmfs \(W_1(\cdot|x)\) and \(W_2(\cdot|x)\) on \(\mathcal{Y}\) given \(x\in\mathcal{X}\), and an input distribution \(P\in\mathcal{P}(\mathcal{X})\), define the conditional KL divergence and conditional Chernoff information by \(\mathbb{D}(W_1\Vert W_2\mid P)\triangleq \sum_{x\in\mathcal{X}} P(x)\,\mathbb{D}(W_1(\cdot|x)\Vert W_2(\cdot|x))\) and \(\mathbb{C}(W_1\Vert W_2\mid P)\triangleq \max_{\lambda\in[0,1]}\big(-\sum_{x\in\mathcal{X}} P(x)\log \sum_{y\in\mathcal{Y}} W_1(y|x)^{\lambda}W_2(y|x)^{1-\lambda}\big)\). The total variation distance between pmfs \(P_X\) and \(Q_X\) on \(\mathcal{X}\) is \(\mathbb{V}(P_X,Q_X)\triangleq \frac{1}{2}\sum_{x\in\mathcal{X}} |P_X(x)-Q_X(x)|\). Given an input distribution \(P_X\) and a channel \(W_{Y|X}\), we write \(P_X\circ W_{Y|X}\) for the induced output distribution \((P_X\circ W_{Y|X})(y)\triangleq \sum_{x\in\mathcal{X}} P_X(x)W_{Y|X}(y|x)\), and denote the mutual information under the joint law \(P_XW_{Y|X}\) by \(\mathbb{I}(P_X;W_{Y|X})\). For a joint channel \(W_{Y,Z|X}\), the conditional entropy of \(Y|Z\) under \(P_XW_{Y,Z|X}\) is denoted by \(\mathbb{H}_{P_X\circ W_{Y,Z|X}}(Y|Z)\). Finally, for a codebook \(\mathcal{C}_M^n=\{x^n(m)\}_{m=1}^M\) of size \(M\) used with a uniform message prior, the induced (mixture) output distribution over a channel \(W_{Y|X}\) is \(P_{Y^n\mid \mathcal{C}_M^n}(y^n)\triangleq \frac{1}{M}\sum_{m=1}^M W_{Y^n|X^n}(y^n\mid x^n(m))\).

\section{System Model and Main Results}\label{sec:systemModel}

Consider the model illustrated in Fig.~\ref{fig:1}, in which the transmitter (Tx) communicates over a state-dependent discrete memoryless channel (DMC) while simultaneously performing a multiple-hypothesis test over the state space \(s \in \mathcal{S} \triangleq \{0, \dots, \Theta-1\}\). The legitimate receiver (Rx) is tasked with decoding the message while an eavesdropper (Eve) attempts both state estimation and message decoding. A sequential message set is defined as \(\mathcal{M} = \cup_{t=1}^\infty \mathbb{F}^t_2, \mathbb{F}_2 \triangleq \{0,1\}\) \cite{chang_sequential_2023}, which is to be kept secure from Eve. A sequential detection framework creates an asymmetry between Tx, who can adaptively decide when to stop transmission, and a passive Eve, thereby granting the legitimate parties an inherent advantage,  which often exists in practice. The channel is specified by the transition probability \(W_{Y_1Y_2Z|XS}\), where \(Y_1\) denotes the Rx's observation, \(Y_2\) Eve's observation, and \(Z\) the feedback signal at Tx. The feedback is assumed to be noiseless, secure, and one-tap delayed, so that \(Z_t = Y_{1,t}\), and the state \(S\) fixed throughout the transmission. Note that the model integrates a \emph{monostatic}, fixed-state ISAC setting with a dual-objective adversary operating in a \emph{bistatic} configuration. 
 We make the following modeling assumptions: 
 \begin{enumerate}
     \item The input, output and state spaces \(\mathcal{X}, \mathcal{Y}_1, \mathcal{Y}_2, \mathcal{Z}, \mathcal{S}\) are all finite.
     \item \(\forall s \neq s' \in \mathcal{S}\) and \(\forall x\in \mathcal{X}\), \(\forall y_1,y_2\in \mathcal{Y}_1,\mathcal{Y}_2\), \(0< \mathbb{D}(W_{\cdot|X,s}(\cdot|x)||W_{\cdot|X,s'}(\cdot|x)) < \infty\).
     \item The prior distribution over all hypotheses is uniform, i.e., \(\pi_s = \frac{1}{\Theta} \forall s \in \mathcal{S}\).
     \item Both Tx, Rx and Eve have knowledge of the set of kernels \(\{W_{Y_1|X,s},W_{Y_2|X,s}\}_{s\in \mathcal{S}}\). 
 \end{enumerate}
 Assumptions 1) and 2) are technical and necessary for finite-state multi-hypothesis testing to ensure all channels are distinguishable and that no observation is noise-free. Assumption 3) ensures the optimality of ML detector. Assumption 4) differs from earlier communication and information theoretic works in which Eve only has access to its own channels \cite{ren2023robust,chu2022joint,he2024joint}\cite{chang2021evasive}. This stronger requirement is motivated by the ISAC setting, in which Tx and Eve's channels are typically correlated through the same underlying environmental state. As a result, Eve may extract state information not only from its direct observations but also indirectly from Tx’s adaptive behavior, particularly when its channel is stronger. 

\begin{figure}[htb!]
    \centering
    \includegraphics[width=8.5cm]{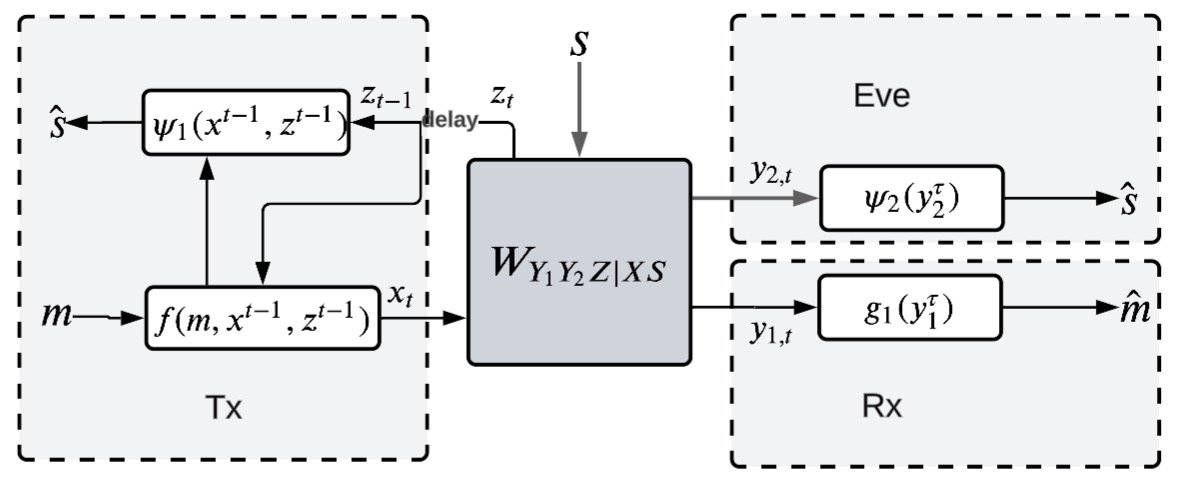}
    \caption{System Model}
    \label{fig:1}
\end{figure}

We now describe the operation of the system in detail. 
For each round \(t \geq 1\), Tx employs an encoding function to produce the next input to the channel as \(f_t:\mathcal{M}\times \mathcal{X}^{t-1}\times \mathcal{Y}_{1}^{t-1} \mapsto \mathcal{X}\). Tx concludes the communication at round \(\tau+1\), where \(\tau\) is a random stopping time determined by a stopping rule \(h_t:\mathcal{X}^{t}\times \mathcal{Y}_1^{t}\mapsto \mathbb{F}_2\). Let \(\mathbb{P}_s(X^t,Y_1^t)\) be the joint probability of \(X^t,Y_1^t\) under state \(s\), the stopping rule is based on the log-likelihood ratio (LLR)
\begin{equation} \label{stoppingrulegeneral}
    \tau =  \min_{s \in \mathcal{S}} \inf\bigg\{t:  \log \frac{\mathbb{P}_s(X^t,Y_1^t)}{\max_{s'\neq s}\mathbb{P}_{s'}(X^t,Y_1^t)} \geq  c_s \bigg\},
\end{equation}
for \(c_s>0\) depending on \(s\).
For each round \(t\in [1,\tau]\), given lossless feedback, the estimated state at Tx is produced by the maximum likelihood estimator (MLE)
\begin{align}\label{MLE}
    \psi_1(x^{t-1}, y_1^{t-1})   
     = \underset{s\in \mathcal{S}}{\arg \max} \, \prod_{k=1}^{t-1} W_{Y_1|X,s}(y_{1,k}|x_k),
\end{align}
where at stopping time \(\tau\) (round \(\tau+1\)), Tx declares an estimated state \(\hat{S} = \psi_1(X^\tau, Y_1^{\tau})\).
On the Rx side, the decoding function for message \(M\) is defined as \(g_1:\cup_{t\geq1}\mathcal{Y}_1^t \mapsto \mathcal{M}\).
 
Eve seeks to simultaneously infer the transmitted message and estimate the state in competition with Tx. Although it cannot influence the transmission policy, Eve may exploit Tx's adaptation by forming its own state estimate at the end of transmission, \(\psi_2:\cup_{t\geq1}\mathcal{Y}_2^{t} \mapsto \mathcal{S}\). It is also possible for Eve to employ the joint estimator \(\arg \max_{m,s} \mathbb{P}(y_2^t |M=m,S=s)\) \cite{chang2023rate}. However, since strong secrecy will be required as part of the achievable definition, Eve's optimal detection is instead given by 
 \begin{equation}
    \psi_2(y_2^t) = \arg \max_{s\in \mathcal{S}} \sum_m \mathbb{P}(y_2^t|M=m,S=s). 
 \end{equation}
By Assumption 3), Eve's decision at stopping time \(\tau\) is the ML estimation 
 over its marginalized mixture posterior \(\mathbb{P}_s(y_2^\tau)\) 
 \begin{equation}\label{eve posterior} 
\begin{aligned}
\frac{1}{2^{M_\tau}}
   \sum_{m=1}^{2^{M_\tau}}
   \Bigg[
      P_{X,s}\!\big(x_1(m)\big)     
      \prod_{k=1}^{\tau-1}
      P_{X,s}\!\big(x_{k+1}(m)\mid x^{k}(m)\big)
 \\ \times
      \prod_{k=1}^{\tau-1}
      W_{Y_2|X,s}\!\big(y_{2,k} \mid x_k(m)\big)
   \Bigg].
\end{aligned}
\end{equation}

For all \(s\in \mathcal{S}\), the reliability is measured by the asymptotic error probability as
 \begin{equation}
     P_c = \underset{s \in \mathcal{S}}{\mathrm{max}} \, \underset{m \in \mathcal{M}}{\mathrm{max}} \,\, \mathbb{P}(g_1(Y_1^\tau) \neq M[1;M_\tau] |M = m,S = s),\nonumber
 \end{equation}
 where \(M_t\) is the number of message bits transmitted by time \(t\) and \(M[1;M_\tau]\) represents the first \(M_\tau\) bits of the message \(M= m\) transmitted by stopping time.
 On the other hand, secrecy is measured in terms of the leakage of message \(M\) to Eve up to the stopping time \(\tau\) as \(\underset{s \in \mathcal{S}}{\mathrm{max}}\,\mathbb{I}(M;Y^\tau_2)\). The detection performance of both Rx and Eve are measured through the detection error probability defined as 
\begin{align}
    &P_{d_1} =  \underset{s \in \mathcal{S}}{\mathrm{max}} \, \underset{m \in \mathcal{M}}{\mathrm{max}} \,\, \mathbb{P}(\psi_1(X^\tau, Y_1^{\tau}) \neq S |M = m,S = s), \nonumber\\
    &P_{d_2} =  \underset{s \in \mathcal{S}}{\mathrm{max}} \,\, \mathbb{P}(\psi_2(Y_2^\tau) \neq S | S = s).\nonumber
\end{align}
The rate and the detection-error exponents are then defined as 
\begin{equation} \label{definitions}
    R^n \triangleq \frac{M_\tau}{n}, E^n_{d_1} = -\frac{1}{n}\log P_{d_1}, E^n_{d_2} = -\frac{1}{n}\log P_{d_2},
\end{equation}
where \(n\) is a constraint on the stopping time.
\begin{definition}
\textit{A} Let \(\delta_1,\delta_2,\delta_3,\delta_4,\delta_5,\delta_6 >0\), \((R, E_1, E_2)\) \textit{tuple is achievable if} \(\exists n\geq n(\delta_1,\delta_2,\delta_3,\delta_4,\delta_5,\delta_6)\) \textit{such that ISAC policy} \((\{h_t\}_{t\geq 1}, \{f_t\}_{t\geq 1}, g_1, \psi_1)\) \textit{has vanishing communication error probability} \( P_c \leq \delta_1\), \textit{information leakage} \(\underset{s \in \mathcal{S}}{\mathrm{max}}\,\mathbb{I}(M;Y_2^\tau) \leq \delta_2\) \textit{and probabilistic stopping time} \(\underset{s \in \mathcal{S}}{\mathrm{max}} \,\underset{w \in \mathcal{M}}{\mathrm{max}} \,\, \mathbb{P}(\tau >n) \leq \delta_3\) \textit{with} 
\begin{align}
    &\underset{s \in \mathcal{S}}{\mathrm{min}} \,\underset{w \in \mathcal{M}}{\mathrm{min}} \,\, \mathbb{P}(R^n \geq R) \geq 1-\delta_4, \\
    & E_{d_1}^n \geq E_1 -\delta_5 ,\\
    & E_{d_2}^n \leq E_2 +\delta_6.
\end{align}
\end{definition}

\begin{theorem} \label{achievability theorem}
\textit{The following region is achievable}:
\begin{equation}
    \bigcup_{\substack{
    \rho \in [0,1],\\
    \{P_{X,s}\}_{s \in \mathcal{S}} \in (\mathcal{P}_\mathcal{X})^{|\mathcal{S}|}
}}
\bigcap_{s\in \mathcal{S}} \{
            (R,E_1,E_2) \in \mathbb{R}^3_{+}\},
\end{equation}
\begin{align}
    &R \leq \min (\rho R_{\text{key}}(s)  \nonumber\\& \quad\quad\quad+ (1-\rho)[R_1(s) -R_2(s)+ R_{\text{key}}(s)]^+, R_1(s)), \nonumber\\
    &E_1  \leq \min_{s'\neq s} \mathbb{D}(W_{Y_1|X,s}||W_{Y_1|X,s'}|P_{X,s}), \nonumber\\
    &E_2  \geq \min_{s'\neq s} \mathbbm{1}_{\{P_{X,s},P_{X,s'} \in \mathcal{P}_{\textnormal{res}}\}} \{\rho\mathbb{C}(W_{Y_2|X,s}||W_{Y_2|X,s'}|P_{X,s}) \nonumber\\
            & \,\,\,\,\,\,\,\,\,\,\,\,\,\,\,\,\, +(1-\rho) \mathbb{C}(P_{X,s}\circ W_{Y_2|X,s}||P_{X,s'}\circ W_{Y_2|X,s'})\} \nonumber\\
            &\,\,\,\,\,\,\,\,\,\,\,\,\,\,\,\,\,+\mathbbm{1}_{\{P_{X,s},P_{X,s'} \notin \mathcal{P}_{\textnormal{res}}\}}\{\mathbb{C}(W_{Y_2|X,s}||W_{Y_2|X,s'}|P_{X,s})\} \nonumber,
\end{align}

\textit{where} \(R_{\text{key}}(s) = \mathbb{H}_{P_{X,s}\circ W_{Y_1,Y_2|X,s}}(Y_1|X,Y_2)\), \(R_1(s) = \mathbb{I}(P_{X,s};W_{Y_1|X,s})\), \(R_2(s) = \mathbb{I}(P_{X,s};W_{Y_2|X,s})\). \(\mathcal{P}_{\textnormal{res}}\) \textit{is the resolvability region defined as}

\begin{equation}
\left\{
\substack{P_{X,s}\\P_{X,s'}} \in \mathcal{P}_{\mathcal X}:\;
\begin{aligned}
& R_1(s) - R_2(s) + R_{\text{key}}(s) \ge 0, \\[2mm]
& \mathbb{C}\!\left(P_{X,s}\!\circ W_{Y_2|X,s}\,\big\|\,P_{X,s'}\!\circ W_{Y_2|X,s'}\right) \\[2mm]
& < \min_{s''\in\{s,s'\}} E_{\textnormal{SC}}(P_{X,s''}, W_{Y_2|X,s''}),\nonumber
\end{aligned}
\right\},
\end{equation}
\begin{align}
    E_{\textnormal{SC}}(P_X,W_{Y|X}) = \min_{Q_{Y|X} \in \mathcal{P}_{\mathcal{Y|X}}} \bigg\{\mathbb{D}(P_X Q_{Y|X} ||P_X W_{Y|X})\nonumber\\+\frac{1}{2}[R - \mathbb{D}(P_X Q_{Y|X}||P_X Q_Y)]^+\bigg\}\nonumber
\end{align}
\textit{is the soft covering exponent} \cite{yagli2019exact}.

\end{theorem}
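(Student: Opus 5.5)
The achievability proof will use a two-phase sequential scheme, in the spirit of \cite{chang_sequential_2023,chang2021evasive}. Fix $\rho\in[0,1]$ and a family $\{P_{X,s}\}_{s\in\mathcal{S}}$.

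\textbf{Phase 1 (exploration).} Tx transmits a short prefix of length $o(n)$ with a fixed full-support input. Since Assumption 2) makes every pair of channels distinguishable, the MLE $\psi_1$ in (\ref{MLE}) identifies $S$ with probability tending to one. Hence, at the cost of a vanishing rate loss, Tx can switch to the state-matched input distribution $P_{X,s}$ with high probability.

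\textbf{Phase 2 (adaptive exploitation).} The remaining horizon is split into a fraction $\rho$ and a fraction $1-\rho$.
\begin{itemize}
\item On the $\rho$-fraction, Tx uses inputs of type $P_{X,s}$ that are not covered by a resolvability code. Its only purpose there is to generate secret key from the noiseless secure feedback. By standard compound secret-key arguments \cite{ahlswede2006transmission,bjelakovic2013secrecy}, Tx and Rx share $Y_1$. Eve sees $(X,Y_2)$ in the worst case, so privacy amplification via universal hashing yields a key at rate $R_{\text{key}}(s)=\mathbb{H}(Y_1|X,Y_2)$, which is used to one-time-pad message bits.
\item On the $(1-\rho)$-fraction, Tx uses a wiretap code from the constant-composition ensemble of type $P_{X,s}$. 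Secrecy rate $[R_1-R_2]^+$ is obtained by binning. Adding the key from both phases gives the rate expression $[R_1-R_2+R_{\text{key}}]^+$. This follows the wiretap-with-feedback construction \cite{ardestanizadeh2009wiretap}, with the rate capped by $R_1(s)$ for reliability.
\end{itemize}
Strong secrecy $\mathbb{I}(M;Y_2^\tau)\le\delta_2$ will follow from a soft-covering lemma \cite{bloch2013strong}. A chaining argument across blocks, with keys independent of past outputs, handles the sequential message set.

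\textbf{Tx exponent $E_1$.} The stopping rule (\ref{stoppingrulegeneral}) uses thresholds $c_s\approx n\min_{s'}\mathbb{D}(W_{Y_1|X,s}\|W_{Y_1|X,s'}|P_{X,s})$. The LLR increments are i.i.d. given the composition, so the law of large numbers gives $\mathbb{P}(\tau>n)\le\delta_3$. A change-of-measure argument then gives $P_{d_1}\le e^{-c_s}$, which yields the stated $E_1$; this matches Chernoff's sequential test \cite{chernoff1992sequential,nitinawarat2013controlled}.

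\textbf{Eve exponent $E_2$.} Eve's optimal test is ML on the mixture (\ref{eve posterior}), and for the achievability statement I only need to lower-bound her error probability.
\begin{itemize}
\item On the key-generation fraction, codewords are not concealed. Her statistic therefore factorizes per letter, and the best exponent is the conditional Chernoff information $\mathbb{C}(W_{Y_2|X,s}\|W_{Y_2|X,s'}|P_{X,s})$.
\item On the wiretap fraction, when the pair lies in $\mathcal{P}_{\textnormal{res}}$, the randomized codebook (message plus bin index at total rate above $R_2$) soft-covers Eve's channel. Hence $P_{Y_2^n|\mathcal{C}}$ is exponentially close to $(P_{X,s}\circ W_{Y_2|X,s})^{\otimes n}$ in total variation, with exponent $E_{\textnormal{SC}}$ \cite{yagli2019exact}.
\item Eve's binary test between $s$ and $s'$ is then a test between two approximately i.i.d. output distributions, up to an additive term $e^{-nE_{\textnormal{SC}}}$. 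Its error exponent is the unconditional Chernoff information of the output marginals, provided this is strictly smaller than $E_{\textnormal{SC}}$. That condition is exactly the second constraint defining $\mathcal{P}_{\textnormal{res}}$.
\item Combining the two fractions by independence of the segments, the exponents add with weights $\rho$ and $1-\rho$. A union over $s'$ gives the minimum.
\item Outside $\mathcal{P}_{\textnormal{res}}$, resolvability fails. We then concede that Eve may effectively see the codeword structure, giving the conditional Chernoff term.
\end{itemize}

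\textbf{Main obstacle.} I expect the hardest step to be this Eve-exponent bound. Approximation in total variation does not directly transfer error exponents: the TV error $e^{-nE_{\textnormal{SC}}}$ must be dominated by the Chernoff error, which is why the strict inequality appears. Eve also knows the adaptation rule from Assumption 4), and her stopping time is coupled with Tx's. I would first bound the TV between Eve's actual observation law and the ideal product law, conditioned on the high-probability event that Phase 1 succeeds and $\tau\approx n$. I would then apply a two-hypothesis Chernoff lower bound to the ideal law, and absorb the exploration prefix and stopping-time fluctuations into $\delta_6$.
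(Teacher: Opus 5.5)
Your argument breaks at the bound on Eve's exponent for the uncovered ($\rho$) fraction and for pairs outside $\mathcal{P}_{\textnormal{res}}$. There you keep the state-dependent type $P_{X,\hat S}$ while the codewords are not concealed. An OTP codebook has rate at most $R_1<R_2$ in the relevant regime, and it is public. So Eve can essentially decode the codeword and read off its type. That type is Tx's state estimate, which is correct with overwhelming probability.

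Per-letter factorization with $\log\frac{W_{Y_2|X,s}(y|x)}{W_{Y_2|X,s'}(y|x)}$ is Eve's optimal statistic only if the input law is the same under both hypotheses. Here it is not, so $\mathbb{C}(W_{Y_2|X,s}\|W_{Y_2|X,s'}|P_{X,s})$ is not an upper bound on her exponent. ``Conceding that Eve sees the codeword structure'' makes her exponent larger, not equal to that term. For instance, suppose Eve's channel depends only weakly on $s$ but $P_{X,s}\neq P_{X,s'}$. Then the conditional Chernoff term is near zero, yet Eve still identifies $S$ through the input type with an exponent bounded away from zero.

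This structural leakage is the paper's central point. Whenever the codebook structure cannot be hidden, the paper switches to an open-loop strategy with a state-independent input, $P_{X,s}=P_X$ for all $s$. Tx then keeps its advantage only through control of the stopping time, as in evasive active hypothesis testing. The paper also sends the codebook type to Rx in a separate $\Delta N$ sub-block protected by key bits, because announcing it would reveal the state. Your proposal never says how Rx learns which codebook is in use, nor why doing so leaks nothing.

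Your coding and rate accounting also do not deliver $\min(\rho R_{\text{key}}+(1-\rho)[R_1-R_2+R_{\text{key}}]^+,R_1)$. If the $\rho$-fraction only generates key, then at $\rho=1$ nothing is transmitted, whereas the theorem claims rate $\min(R_{\text{key}},R_1)$. In the paper this fraction is pure OTP and carries key-encrypted message bits.

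On the $(1-\rho)$-fraction, Rx cannot decode ``message plus bin index at total rate above $R_2$'' when $R_1<R_2$, which is exactly where resolvability coding is needed. The missing device is to put part of the feedback key, $r_{1,k}$ of rate $R_2-R_1$, into the codeword index as randomness known to Rx. Eve then faces a codebook of rate $R_2$ while Rx decodes only rate $R_1$. The message is one-time-padded with the remaining key $r_{2,k}$, of rate $R_1+R_{\text{key}}-R_2$. This yields $[R_1-R_2+R_{\text{key}}]^+$, and it is where the first condition $R_1-R_2+R_{\text{key}}\ge 0$ of $\mathcal{P}_{\textnormal{res}}$ comes from; in your argument that condition plays no role.

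Your handling of the second condition, that the soft-covering error $e^{-nE_{\textnormal{SC}}}$ must be dominated by $e^{-n\mathbb{C}}$, is sound. So are your stopping-time and $E_1$ arguments, which match the paper in substance.
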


Theorem~\ref{achievability theorem} follows from an adaptive coding strategy that adjusts its operation depending on whether soft covering \cite{yagli2019exact} provides sufficient randomness to conceal the structure of the mixture codebook at the eavesdropper. When the soft-covering rate exceeds the i.i.d.\ Chernoff detection exponent and Tx enjoys a stronger channel than Eve, Tx employs a key-assisted wiretap code, thereby enhancing both communication secrecy and sensing privacy.

In contrast, when Eve has a moderate channel advantage, Tx may choose between two options:
(i) a resolvability-assisted scheme that uses a portion of the key to induce approximate product distributions at Eve and the remaining key for one-time-pad (OTP) encryption, thereby securing both communication and sensing; or
(ii) a pure communication-security scheme that relies solely on OTP, which achieves higher secrecy rate at the cost of reduced sensing privacy.
The achievable region in this case is obtained through a simple time-sharing argument between these two policies (captured by parameter \(\rho\)).

Finally, when Eve’s channel advantage is so pronounced Tx cannot mask the structure of the codebook. Analyzing this regime is challenging, since no closed-form expression exists for Chernoff information between mixtures. In this case, Tx allocates all keys to OTP-based communication security and reverts to an open-loop sensing strategy to minimize state information leakage.

\section{Numerical Example}
For illustration purposes, we consider the channel specified in Table~\ref{tab:channel_parameters}. In this example, all channels 
\(\{W_{Y_1|X,s}\}_{s\in\{0,1\}}\) and \(\{W_{Y_2|X,s}\}_{s\in\{0,1\}}\) are binary symmetric channels (BSCs). 
For BSCs, it is known that there is no tradeoff between Tx’s sensing exponent and its 
communication rate~\cite{chang2023rate}; accordingly, we plot the \((R,E_1)\) and \((R,E_2)\) regions 
in Fig.~\ref{fig:2}. In this setting, we operate in the regime
\(
(R_2(s) - R_{\text{key}}(s))^{+} < R_1(s) < R_2(s), \forall\, s.
\)
The sensing-security–optimal point \(P_{\textnormal{SO}}\) (region I) is achieved through a confusion strategy in which Tx selects the input distributions \(\{P_{X,s}\}_{s\in \mathcal{S}}\) to satisfy
\(
\mathbb{C}\!\left(P_{X,s}\circ W_{Y_2|X,s}\,\big\|\,P_{X,s'}\circ W_{Y_2|X,s'}\right) = 0, \forall s,s'\in \mathcal{S}.
\)
For BSCs, choosing \(P_{X,s}(0)= \tfrac{1}{2} \forall s\) holds universally. Region II is obtained with a resolvability code that strengthens sensing security at the cost of a smaller maximum secrecy rate. 
The transition from \(P_{\textnormal{SO}}\) to \(P_{\textnormal{SC}}\) occurs when Tx deviates from the sensing-security-optimal tuple to one that maximizes the secrecy rate. In this example, Eve’s detection 
exponent is smaller than the soft-covering exponent, so the latter does not impose a performance 
constraint.
Beyond the resolvability threshold, achieving the communication-security–optimal point \(P_{\textnormal{CO}}\) (region III) requires 
Tx to allocate all secret keys to OTP encryption and operate in an open-loop sensing mode, at the cost of a larger sensing leakage.
The segment connecting \(P_{\textnormal{SC}}\) and \(P_{\textnormal{CO}}\) is obtained by time-sharing between these two operating 
points. 

\begin{table}[htb!]
\caption{Table for BSC $W_{Y_2|X,S}(y_2|x,s),W_{Y_1|X,S}(y_1|x,s)$ and $S\in\{0,1\}$}
    \centering
    \begin{tabular}{|c |c |c |}
    \hline
         \diagbox{S}{X} & $W_{Y_1|X,S}(y_1=1|x=0,s)$ & $W_{Y_2|X,S}(y_2=1|x=0,s)$ \\
         \hline
        $0$ &\(0.1\) & \(0.06\) \\
        \hline
         $1$ &\(0.15\) & \(0.03\) \\
    \hline
    \end{tabular}
    \label{tab:channel_parameters}
\end{table}

\captionsetup{font={small}}
\begin{figure}
    \centering
    \includegraphics[width=9cm]{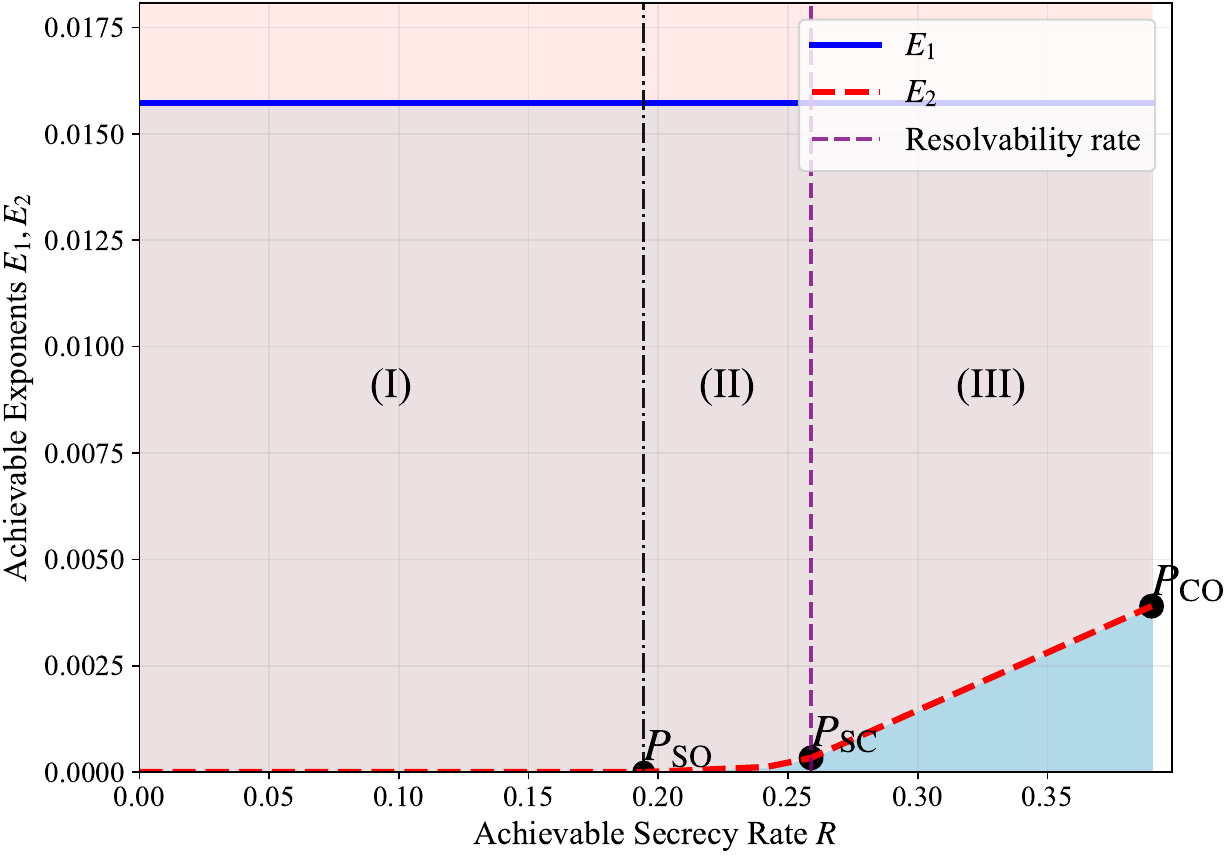}
    \caption{Illustration of Secrecy Privacy Tradeoffs in secure ISAC }
    \label{fig:2}
\end{figure}

\section{Proof of Theorem \ref{achievability theorem}}
\subsection{Policy Description}\label{Policy Description}
Tx employs a block-based feedback-adaptation scheme.
For \(n \in \mathbb{N}^*\), define \(N = \sqrt{n}\) as the block length. At the beginning of the protocol, Tx and Rx first agree on a public set of types \(\{P_{X,s}\}_{s\in \mathcal{S}}\) for each state. Let \(\hat{S}_{(k-1)N} = \psi_1(X^{(k-1)N},Y^{(k-1)N}_1)\) be the estimated state at the beginning of \(k\)-th block, then \(P_{X,\hat{S}_{(k-1)N}}\) is the codeword type used in \(k\)-th block. Exceptionally, for \(k \leq n^{\beta}, \beta \in (0,\frac{1}{2})\), we use a universal code with type \(\tilde{P}_X\).

For each block, Tx employs a different scheme based on estimated channels. i) When \(R_1(\hat{S}_{(k-1)N}) > R_2(\hat{S}_{(k-1)N})\), Tx uses wiretap coding with feedback \cite{ahlswede2006transmission}\cite[Chapter 3.6]{bloch2011physical}. ii) When \((R_2(\hat{S}_{(k-1)N})-R_{\text{key}}(\hat{S}_{(k-1)N}))^+<R_1(\hat{S}_{(k-1)N}) < R_2(\hat{S}_{(k-1)N})\), Tx devotes part of the key for resolvability coding, and uses the remaining key for secure communication, using OTP. iii) When \((R_2(\hat{S}_{(k-1)N})-R_{\text{key}}(\hat{S}_{(k-1)N}))^+ >R_1(\hat{S}_{(k-1)N})\), Tx allocates all key bits for OTP. 

\subsubsection{Encoding}
For \(k  \leq n^{\beta}\), Tx uses a universal code for sensing purpose only. This serves the dual purpose of obtaining an initial estimation of the channels as well as obtaining shared keys for subsequent blocks. Tx draws channel inputs from the distribution \(\tilde{P}_X = \arg \max_{P_X}  \min_{s, s'\neq s} \mathbb{D}(W_{Y_1|X,s}||W_{Y_1|X,s'}|P_{X}) \).

Encoding for \(k  > n^{\beta}\) is split into two portions, where a fraction \((1-\Delta)N\) of block length is used for transmitting message and the remaining \(\Delta N\) for communicating code structure as described later, for \(0<\Delta < 1\). In the sequel, we focus on (ii); policy and analysis of  (i) and (iii) are based on combining results on wiretap coding with feedback \cite{ahlswede2006transmission} and OTP-based secrecy \cite{bloch2011physical}. By \cite[Lemma 1]{ahlswede2006transmission}, for \(n\) sufficiently large from key distillation of the previous block, Tx receives secret key \(r_k\) uniformly distributed in \(\llbracket 1, 2^{N (R_{\text{key}}(\hat{S}_{(k-1)N})-\delta(\epsilon))}\rrbracket\), split into two parts as \((r_{1,k},r_{2,k})\). Construct a codebook with \(\lceil 2^{(1-\Delta)N (R_2(\hat{S}_{(k-1)N})-\delta(\epsilon))} \rceil\) codewords independently drawn from the distribution \(P_{X,\hat{S}_{(k-1)N}}\), relabeled as 
\begin{align} \label{index2}
    &(m_k,n_k,r_{1,k}) \nonumber\\& \in  \llbracket 1, 2^{(1-\Delta) N (R_1(\hat{S}_{(k-1)N})+R_{\text{key}}(\hat{S}_{(k-1)N})-R_2(\hat{S}_{(k-1)N})-\delta(\epsilon))^+}\rrbracket \nonumber \\& \times \llbracket 1, 2^{(1-\Delta) N(R_2(\hat{S}_{(k-1)N})-R_{\text{key}}(\hat{S}_{(k-1)N}))^+}\rrbracket \nonumber\\  &\times \llbracket 1, 2^{(1-\Delta) N(R_2(\hat{S}_{(k-1)N})-R_1(\hat{S}_{(k-1)N}))^+}\rrbracket .
\end{align}
To transmit message in \(k\)-th block, Tx sends \(\mathbf{x}^{(1-\Delta)N}(m_k\oplus r_{2,k}, n_k,r_{1,k})\). Tx uses the remaining block length \(\Delta N\) to transmit information about \(P_{X,\hat{S}_{(k-1)N}}\) with the same method, where codewords are drawn from \(\tilde{P}_X\). This has no rate impact since \(\lim_{n\rightarrow \infty} \frac{\log |\mathcal{S}|}{\Delta N}=0\). 

\subsubsection{Stopping rule and Decoding}
The communication concludes whenever (\ref{stoppingrulegeneral}) is satisfied. For \(t \equiv 1 (\text{mod}\,N)\), Tx chooses 
\begin{equation} \label{stopping rule}
    c_s =  n \left(\mathbb{D}(W_{Y_1|X,s}||W_{Y_1|X,s'}|P_{X,s}) -\epsilon\right)
\end{equation}
for \(\epsilon>0\) in (\ref{stoppingrulegeneral}). In other words, Tx only determines whether the transmission stops at \(t \equiv 0 (\text{mod}\,N)\). 

After Rx obtains \(Y_1^\tau\), the decoder operates on a block basis. For each block \(k\), Rx first reconstructs the codeword type used, \(P_{X,\hat{s}_{(k-1)N}}\), using constant-composition decoding with \(\tilde{P}_X\) ~\cite{csiszar2011information} and shared OTP randomness. Once the code structure is recovered, the message \(m_k\) is decoded similarly.
\subsection{Policy Analysis}
\begin{lemma}\label{stopping time lemma}
    \textit{For any} \(s, s' \in \mathcal{S}\) \textit{there exists}  \(\psi_{s,s'},\phi_{s,s'} > 0\) \textit{such that} 
\begin{equation}
    \mathbb{P}_s(\tau >n) \leq (|\mathcal{S}|-1) \exp \left(- \min_{s'\neq s}\frac{n \psi_{s,s'}^2}{2\phi^2_{s,s'}}\right).
\end{equation}
\end{lemma}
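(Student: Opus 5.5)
Under state $s$ I will reduce the event $\{\tau>n\}$ to a union over the competing hypotheses $s'\neq s$, and bound each pairwise term with a martingale concentration inequality (Azuma--Hoeffding). By the stopping rule (\ref{stoppingrulegeneral}), if $\tau>n$ then in particular the test for the true hypothesis $s$ has not fired by time $n$. At the last checkpoint $t\le n$ (a multiple of $N$), this gives
\[
\log \frac{\mathbb{P}_s(X^t,Y_1^t)}{\max_{s'\neq s}\mathbb{P}_{s'}(X^t,Y_1^t)} < c_s .
\]
Hence some $s'\neq s$ satisfies $L_{s,s'}(t)\triangleq \sum_{k=1}^t \log \frac{W_{Y_1|X,s}(Y_{1,k}|X_k)}{W_{Y_1|X,s'}(Y_{1,k}|X_k)} < c_s$. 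A union bound over $s'$ yields the prefactor $|\mathcal{S}|-1$ and reduces the claim to a single pair, where I take the worst pair for the exponent. The channel input terms cancel in the likelihood ratio because $X_k$ is a deterministic function of the message and the past, common to both hypotheses.

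For a fixed pair I write $L_{s,s'}(t)=\sum_k (Z_k - \mathbb{E}_s[Z_k\mid\mathcal{F}_{k-1}]) + \sum_k \mathbb{E}_s[Z_k\mid\mathcal{F}_{k-1}]$. Here $Z_k$ is the per-letter log-likelihood ratio and $\mathcal{F}_{k-1}$ is generated by $(M,X^{k-1},Y_1^{k-1})$, so $X_k$ is $\mathcal{F}_{k-1}$-measurable. The conditional mean is $\mathbb{D}(W_{Y_1|X,s}(\cdot|X_k)\Vert W_{Y_1|X,s'}(\cdot|X_k))$. Assumption 2) and finiteness of the alphabets ensure that the increments are bounded, $|Z_k|\le \phi_{s,s'}$, where $\phi_{s,s'}=\max_{x,y}|\log(W_{Y_1|X,s}(y|x)/W_{Y_1|X,s'}(y|x))|$; finite divergence rules out zero transition probabilities. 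Therefore the centered sum is a martingale with differences bounded by $2\phi_{s,s'}$, and the constant can be absorbed into the definition of $\phi$.

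The main step is showing that the drift term exceeds $c_s$ by a margin linear in $n$. Consider $t=n$, or the last multiple of $N$ before $n$, which differs by $o(n)$. In the initial $n^{\beta}\cdot N=o(n)$ symbols the inputs follow $\tilde P_X$, and each contributes a nonnegative divergence. After that, the block types are $P_{X,\hat S_{(k-1)N}}$. I need the types to equal $P_{X,s}$ in all but a vanishing fraction of blocks with high probability. I would control this using the exponential consistency of the ML estimate $\psi_1$, which should itself follow from the same Azuma argument applied to the universal-code prefix with $\tilde P_X$. Since codewords are constant composition, the drift over a correct block equals exactly $N\,\mathbb{D}(W_{Y_1|X,s}\Vert W_{Y_1|X,s'}|P_{X,s})$. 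The drift therefore sums to at least $n(\mathbb{D}(W_{Y_1|X,s}\Vert W_{Y_1|X,s'}|P_{X,s})-\epsilon/2)$, which leaves a gap $\psi_{s,s'}\triangleq \epsilon/2>0$ relative to $c_s$ from (\ref{stopping rule}). This is the delicate point. The threshold in (\ref{stopping rule}) is written with the pair $(s,s')$, and I would interpret it as the minimum over $s'$ so that the gap is positive for every competitor. Wrong-state blocks contribute divergence at least $\min_x \mathbb{D}(\cdot\Vert\cdot)>0$, so they only reduce the margin by a vanishing fraction; any residual probability of misestimation is absorbed into the exponent by shrinking $\psi_{s,s'}$.

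Finally, Azuma--Hoeffding gives $\mathbb{P}_s(L_{s,s'}(n)<c_s)\le \mathbb{P}_s\big(\sum_k(Z_k-\mathbb{E}_s[Z_k|\mathcal{F}_{k-1}])\le -n\psi_{s,s'}\big)\le \exp(-n\psi_{s,s'}^2/(2\phi_{s,s'}^2))$. Summing over the $|\mathcal{S}|-1$ competitors and bounding each term by the worst exponent gives the stated bound.
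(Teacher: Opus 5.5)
Your proposal follows essentially the same route as the paper's proof in Appendix A. Both arguments use a union bound over competitors $s'\neq s$, split the pairwise LLR into a martingale part plus a drift, bound the increments via Assumption 2), obtain a linear drift lower bound from the exponential consistency of $\psi_1$ (the paper's auxiliary lemma on the first time $T$ with $\hat s_T=s$), and finish with Azuma's inequality.

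One correction is needed. After the $n^{\beta}$ prefix, only $(1-\Delta)N$ symbols of each block have type $P_{X,\hat S_{(k-1)N}}$; the remaining $\Delta N$ are drawn from $\tilde P_X$ to convey the code structure. So a correctly estimated block contributes at least $(1-\Delta)N\,\mathbb{D}(W_{Y_1|X,s}\Vert W_{Y_1|X,s'}\mid P_{X,s})$, not exactly $N\,\mathbb{D}(\cdot)$. As in the paper, you must take $\Delta$ small relative to $\epsilon$ to keep $\psi_{s,s'}>0$.

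Your handling of misestimation, by intersecting with a good event whose complement is exponentially small, is if anything more careful than the paper. The paper treats $T=o(n)$ as deterministic and implicitly assumes the estimate stays correct after $T$.
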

\textit{Proof:} \cite[Appendix A]{guo2026secure}. \hfill $\blacksquare$

We first provide a result on the stopping time constraint. Lemma~\ref{stopping time lemma} states a high probability concentration on the stopping time around \(n\) as a result of the stopping rule. 

\subsubsection{Achievable exponent analysis for transmitter and eavesdropper}
Rx's detection error probability satisfies 
\begin{align} \label{main exponent}
    P_{d_1} \leq &\underset{s \in \mathcal{S}}{\text{max}}\,\underset{m \in \mathcal{M}}{\text{max}}\,(|\mathcal{S}|-1)\nonumber \\ &\times \exp(-n (\min_{s'\neq s}\,\mathbb{D}(W_{Y_1|X,s}||W_{Y_1|X,s'}|P_{X,s})-\epsilon)).
\end{align}
This comes from the fact that detection exponent result for sequential hypothesis testing holds under an input constraint \cite{nitinawarat2013controlled}, where a detailed proof can be found in \cite[Appendix B]{guo2026secure}. 

Next, Eve's detection error probability  is lower bounded as 
\begin{align}
    P_{d_2} = & \underset{s}{\max} \bigg[\sum_{t=1}^{n} \mathbb{P}_s(\psi_2(Y_2^\tau) \neq s |\tau = t) \mathbb{P}_s(\tau = t) \nonumber\\
    &+ \sum_{t=n+1}^\infty \mathbb{P}_s(\psi_2(Y_2^\tau) \neq s |\tau = t) \mathbb{P}_s(\tau = t) \bigg ] \nonumber\\
    \geq & \underset{s}{\max}\,\mathbb{P}_s(\tau \leq n)\mathbb{P}_s(\psi_2 (Y_2^{\tau}) \neq s |\tau = n) .
\end{align}
For \(n \rightarrow \infty \), \(\mathbb{P}_s(\tau \leq n) \geq 1-e^{-n\delta_3}\) by Lemma~\ref{stopping time lemma}. Moreover,
\begin{equation}
    \underset{s}{\max}\,\mathbb{P}_s(\psi_2(Y_2^n)\neq s) 
    \geq \max_{s,s':s'\neq s}\,(1-\delta_3)\mathbb{P}_s(f^{s,s'}_{ML}(Y_2^{n}) \neq s ),\nonumber
\end{equation}
where \(f^{s,s'}_{ML}\) is the binary MLE between \(s\) and \(s'\), we obtain
\begin{align} \label{achievableE2}
   -\frac{1}{n} \log P_{d_2} 
   \leq & \underset{s}{\min}\,\underset{s'\neq s}{\min}
   \left\{- \frac{1}{n} \log\mathbb{P}_s\left(f^{s,s'}_{ML}(Y_2^{n}) \neq s \right) + \delta_6\right\} .
\end{align}
We bound the piece-wise probability of detection error at Eve with the following lemma. 

\begin{lemma} \label{Eve exponent lemma}
Let \(s\in \mathcal{S}\) be the true hypothesis and \(\{P_{X,s}\}_{s \in \mathcal{S}}\) be the policy tuple, then
if \(R_1(s) +R_{\text{key}}(s)> R_2(s)\),
\begin{align}
     &- \frac{1}{n} \log\mathbb{P}_s\left(f^{s,s'}_{ML}(Y_2^{n}) \neq s \right)\nonumber\\ &\leq \mathbb{C}(P_{X,s}\circ W_{Y_2|X,s}||P_{X,s'}\circ W_{Y_2|X,s'})\nonumber \\ & -\frac{1}{n}\log\bigg(1-c_3 \frac{e^{-n\min (E_{\textnormal{SC}} (P_{X,s},W_{Y_2|X,s}),E_{\textnormal{SC}}(P_{X,s'},W_{Y_2|X,s'}))}}{e^{-n \mathbb{C}(P_{X,s}\circ W_{Y_2|X,s}||P_{X,s'}\circ W_{Y_2|X,s'})}}\bigg),
\end{align}
 for constant \(c_3 >0\). If \(R_1(s) +R_{\text{key}}(s)< R_2(s)\),
 \begin{equation}
     - \frac{1}{n} \log\mathbb{P}_s\left(f^{s,s'}_{ML}(Y_2^{n})\neq s \right) \leq \mathbb{C}(W_{Y_2|X,s}|| W_{Y_2|X,s'}|P_{X,s})
\end{equation}
\end{lemma}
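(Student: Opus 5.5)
The plan is to lower bound the binary ML error probability at Eve by a Chernoff-type lower bound on a binary hypothesis test between two \emph{surrogate} distributions, and then pay for the replacement in total variation. For the binary test between $\mathbb{P}_s$ and $\mathbb{P}_{s'}$ on $Y_2^n$ with uniform prior, the ML error satisfies
\begin{align*}
\mathbb{P}_s\big(f^{s,s'}_{ML}(Y_2^n)\neq s\big) \geq \tfrac{1}{2}\sum_{y_2^n}\min\{\mathbb{P}_s(y_2^n),\mathbb{P}_{s'}(y_2^n)\},
\end{align*}
up to the usual factor relating per-hypothesis and average error. Since $P\mapsto\sum\min\{P,Q\}$ is $1$-Lipschitz in total variation, replacing $\mathbb{P}_s$ and $\mathbb{P}_{s'}$ by $Q_s$ and $Q_{s'}$ costs at most $\mathbb{V}(\mathbb{P}_s,Q_s)+\mathbb{V}(\mathbb{P}_{s'},Q_{s'})$.

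\textbf{Case $R_1(s)+R_{\text{key}}(s)>R_2(s)$.} I take $Q_s=(P_{X,s}\circ W_{Y_2|X,s})^{\otimes n}$, and likewise for $s'$. In the policy of (\ref{index2}), the codebook index $(m_k\oplus r_{2,k},n_k,r_{1,k})$ is uniform from Eve's view, because the OTP and key bits are uniform and secret. The codebook therefore has rate about $R_2(s)$, which is strictly above the channel-resolvability threshold $\mathbb{I}(P_{X,s};W_{Y_2|X,s})$ once the $\delta(\epsilon)$ slack is accounted for.

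By the exact soft-covering exponent of \cite{yagli2019exact}, the expected total variation between Eve's block output and the product distribution decays as $e^{-N E_{\textnormal{SC}}}$. I would sum this over the $\sqrt n$ blocks, then condition on the event that the state estimates $\hat S_{(k-1)N}$ are correct after the initial $n^\beta$ blocks. A union bound then gives $\mathbb{V}(\mathbb{P}_s,Q_s)\le c\,e^{-nE_{\textnormal{SC}}(P_{X,s},W_{Y_2|X,s})}$, where I interpret the exponent as normalized to $n$, as the statement implicitly does.

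For the product distributions I use the tight Chernoff asymptotics, $\sum\min\{Q_s,Q_{s'}\}\geq c\,e^{-n\mathbb{C}(Q_s\|Q_{s'})}$ up to polynomial factors absorbed into $c$. Subtracting the TV penalty and factoring out $e^{-n\mathbb{C}}$ gives
\begin{align*}
\mathbb{P}_s(\text{err})\geq e^{-n\mathbb{C}}\Big(1-c_3\,e^{-n\min(E_{\textnormal{SC}}(s),E_{\textnormal{SC}}(s'))}/e^{-n\mathbb{C}}\Big).
\end{align*}
Taking $-\frac1n\log$ yields the claimed bound.

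\textbf{Case $R_1(s)+R_{\text{key}}(s)<R_2(s)$.} Here the codebook cannot be resolved and Eve may be assumed to learn the codeword. I would use a genie argument: giving Eve the transmitted codeword $X^n$ can only lower her error probability. Conditioned on $x^n$ of type $P_{X,s}$, the test between $W_{Y_2|X,s}^{\otimes n}(\cdot|x^n)$ and $W_{Y_2|X,s'}^{\otimes n}(\cdot|x^n)$ is a product test with memoryless but non-identical letters. Its ML error is lower bounded by $e^{-n(\mathbb{C}(W_{Y_2|X,s}\|W_{Y_2|X,s'}|P_{X,s})+o(1))}$ via the standard Chernoff lower bound through tilted-measure/Cramér arguments for independent non-identical sums. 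Since the genie only helps Eve, the bound transfers to Eve's true error.

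\textbf{Main obstacle.} I expect the hardest step to be the resolvability step in the first case. Eve's observation is a concatenation of adaptively chosen blocks whose types depend on Tx's feedback-based estimates, and $\mathbb{P}_s$ additionally includes the universal-code prefix and the $\Delta N$ structure-signalling portion. To handle this, I would first argue that under the true state $s$ all but the first $n^\beta$ blocks use $P_{X,s}$ except on an event of exponentially small probability, which gets folded into $c_3$. I would also treat the prefix blocks as identical under both hypotheses' surrogate. Then I would apply the soft-covering bound blockwise and use the triangle inequality across blocks. Making the exponent come out as $n\,E_{\textnormal{SC}}$ rather than $N\,E_{\textnormal{SC}}$ requires reading $E_{\textnormal{SC}}$ as the effective overall exponent; I would adopt that convention rather than fight it.
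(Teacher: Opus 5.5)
\textbf{Case 1: the blockwise exponent gap.} Your first case follows the paper's skeleton. You replace Eve's codebook-mixture outputs by the i.i.d.\ laws $(P_{X,s}\circ W_{Y_2|X,s})^{\otimes n}$ and $(P_{X,s'}\circ W_{Y_2|X,s'})^{\otimes n}$ in total variation via soft covering, then use the Chernoff asymptotics of the i.i.d.\ test. Starting from $\sum\min\{\mathbb{P}_s,\mathbb{P}_{s'}\}=1-\mathbb{V}(\mathbb{P}_s,\mathbb{P}_{s'})$ reaches the quantity the paper eventually bounds, and does so more directly.

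However, the step you settle ``by convention'' is a genuine gap. Write $\mathbb{C}$ for $\mathbb{C}(P_{X,s}\circ W_{Y_2|X,s}\|P_{X,s'}\circ W_{Y_2|X,s'})$. Blockwise soft covering plus a union bound over the $\sqrt{n}$ blocks gives only $\mathbb{V}(\mathbb{P}_s,Q_s)\lesssim\sqrt{n}\,e^{-\sqrt{n}E_{\textnormal{SC}}}=e^{-o(n)}$. For $\mathbb{C}>0$ this is exponentially larger than $e^{-n\mathbb{C}}$. Your subtractive bound $\sum\min\{Q_s,Q_{s'}\}-\mathbb{V}(\mathbb{P}_s,Q_s)-\mathbb{V}(\mathbb{P}_{s'},Q_{s'})$ is then negative and proves nothing. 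Renaming the exponent changes the claim rather than proving it.

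The same objection applies to what you fold into $c_3$. A wrong post-prefix estimate has probability only of order $e^{-c\,n^{1/2+\beta}}$. Giving the prefix a common surrogate under $s$ and $s'$ can cost total variation close to $1$. The $\Delta N$ signalling segments, which you mention but never handle, raise the same issue.

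The paper obtains $e^{-nE_{\textnormal{SC}}}$ differently. It discards the times outside $\mathcal{T}_s$ and applies Lemma~\ref{softcovering} once to the whole length-$n$ mixture $P_{Y^n|\mathcal{C}^n_{2^{M_n}}(P_{X,s})}$. That sidesteps, rather than resolves, the block structure you noticed.

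To keep your blockwise picture, perturb multiplicatively instead. Let $G$ be the event that all post-prefix estimates are correct. $\sum\min\{\cdot,\cdot\}$ is monotone in each argument, so $\mathbb{P}_s\ge\mathbb{P}_s(G)\,\mathbb{P}_s(\cdot\,|\,G)$ makes the bad event cost only a constant factor. It is also super-multiplicative on product measures, since $\min\{\prod_k a_k,\prod_k b_k\}\ge\prod_k\min\{a_k,b_k\}$. Provided Eve's block outputs are (conditionally) independent on $G$, the following then hold:
\begin{itemize}
\item each message segment of length $N'=(1-\Delta)N$ contributes a factor at least $e^{-N'(\mathbb{C}+o(1))}-2e^{-N'E_{\textnormal{SC}}}$;
\item the prefix and signalling segments cost factors $e^{-o(n)}$ and $e^{-O(\Delta n)}$;
\item the product has exponent $\mathbb{C}+O(\Delta)+o(1)$ whenever $\mathbb{C}<\min\{E_{\textnormal{SC}}(P_{X,s},W_{Y_2|X,s}),E_{\textnormal{SC}}(P_{X,s'},W_{Y_2|X,s'})\}$, which is the $\mathcal{P}_{\textnormal{res}}$ condition.
\end{itemize}
Note also that in (\ref{index2}) Eve faces $2^{(1-\Delta)N(R_2(s)-\delta(\epsilon))}$ codewords. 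That is a rate slightly \emph{below} $\mathbb{I}(P_{X,s};W_{Y_2|X,s})$, where $E_{\textnormal{SC}}=0$. The slack has the opposite sign from what you assert and must be flipped for soft covering to apply.

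\textbf{Case 2: the genie needs a state-independent input law.} Your genie argument is valid only if $X^n$ has the same law under $s$ and $s'$, and this is the missing idea. Under the adaptive policy the codeword type is $P_{X,s}$ under $s$ but $P_{X,s'}$ under $s'$, so the revealed codeword itself points to the hypothesis. The genie-aided test is therefore not the product test between $W_{Y_2|X,s}^{\otimes n}(\cdot|x^n)$ and $W_{Y_2|X,s'}^{\otimes n}(\cdot|x^n)$ that you analyze. Its error is driven by the probability that Tx's own estimates stay wrong over a constant fraction of the horizon. It can be far below $e^{-n\mathbb{C}(W_{Y_2|X,s}\|W_{Y_2|X,s'}|P_{X,s})}$, so nothing transfers to Eve.

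The paper's treatment of this regime rests on exactly this policy change. Since the mixture is no longer resolvable, Tx reverts to an open-loop input $P_{X,s}=P_X$ for all $s$ and keeps its advantage only through the stopping rule. The exponent is then imported from the evasive-testing analysis of \cite{chang2021evasive}, with details omitted.

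Once you add that assumption, your genie argument becomes a valid and more explicit substitute. With pure OTP the index $m_k\oplus r_k$ is uniform and independent of the key and of all past outputs. Hence $X^n$ has a state-independent law, and $Y_2^n$ given $X^n$ is exactly $\prod_t W_{Y_2|X,s}(\cdot|x_t)$. The fixed-composition product test then has error $e^{-n(\mathbb{C}(W_{Y_2|X,s}\|W_{Y_2|X,s'}|P_X)+o(1))}$ by the usual tilting argument.
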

\textit{Proof:} \cite[Appendix C]{guo2026secure}. \hfill $\blacksquare$

Lemma~\ref{Eve exponent lemma} states that in order for Tx to hide the structure of the code via channel resolvability, the minimum rate at which the mixture approximates the product distribution needs to be faster than the i.i.d Chernoff detection exponent, in which case Tx employs an adaptive strategy described in Section. \ref{Policy Description}. Otherwise, Tx cannot ensure sensing is secure and falls back on an open-loop strategy in order to limit the state information leakage via adaptive control sequence. 
\subsubsection{Secrecy and reliability analysis}
We first analyze the leakage chained across a random number of blocks.  

Let \(Y_{2,k} = Y_{2,(k-1)N+1:kN}\) and \(\tilde{M}_k = m[M_{(k-1)N};M_{kN}]\) denotes Eve's observation and the message transmitted in block \(k\). Define the binary variable \(A_k\) such that \(A_k = 1 \) when \(\psi_1(X^{(k-1)N},Y_1^{(k-1)N}) = s\) and \(A_k=0\) otherwise, where \(s\) is the true state. The information leakage \(\mathbb{I}(M;Y_2^\tau)\) is 
\begin{align} \label{secrecy}
    & \mathbb{I}(\tilde{M}_{n^{\beta}+1}\dots \tilde{M}_{\frac{\tau}{N}};Y_{2,n^{\beta}+1}\dots Y_{2,\frac{\tau}{N}}) \nonumber\\
    \leq & \sum_{k=n^{\beta}+1}^{\tau/N}\mathbb{I}(\tilde{M}_k;A_k|\tilde{M}_{1:k-1}) + \mathbb{I}(\tilde{M}_k;Y_2^\tau |\tilde{M}_{1:k-1},A_k)\nonumber \\
     \leq &\sum_{k=n^{\beta}+1}^{\tau/N} \mathbb{I}(\tilde{M}_k;Y_2^\tau|\tilde{M}_{1:k-1},A_k=1) + c_1\sqrt{n}e^{-c_2 k \sqrt{n}}
\end{align}
for constants \(c_1,c_2>0\). The last inequality in (\ref{secrecy}) comes from observing that \(A_k\) depends only on codeword type, which is independent of message \(\tilde{M}_k\) , upper bounding \(\mathbb{I}(\tilde{M}_k;Y_2^\tau|\tilde{M}_{1:k-1},A_k=0)\) by \(\log |\mathcal{M}_k|\) and  Lemma 5 in Appendix \cite{guo2026secure}. First term in (\ref{secrecy}) is the leakage within each block, which we can show is vanishing by standard joint typicality analysis on wiretap code \cite{thomas2006elements} and OTP \cite{bloch2011physical}. We relegate reliability analysis to \cite[Appendix D]{guo2026secure}. 

\subsubsection{Achievable secrecy rate analysis}
 For any \(s \in \mathcal{S}\) such that \(R_1(s) \geq R_2(s)\), the secrecy rate is determined by the achievable rate of wiretap channel and key rate. For sequential analysis, the number of message bits transmitted by stopping time \(\tau\) is  
\begin{align}
    M_\tau \geq  &\sum_{k=n^{\beta}+1}^{\tau/N} \mathbbm{1}\{\hat{S}_{(k-1)N}=s\} \times \nonumber\\ &\lfloor (1-\Delta)N (R_1(s)-R_2(s)+\min(R_2(s),R_{\text{key}}(s))) \rfloor.\nonumber
\end{align}
By Lemma 5 in Appendix\cite{guo2026secure} and sufficiently small \(\beta\)
\begin{align}
    \frac{M_\tau}{n} &\geq (1-\zeta_2) \frac{\tau}{n} \times \nonumber\\ &\lfloor (1-\Delta) (R_1(s)-R_2(s)+\min(R_2(s),R_{\text{key}}(s))) \rfloor,\nonumber
\end{align}
for \(\zeta_2 >0\). Moreover, with standard concentration of measure argument we have 
\begin{align} \label{achievablerate}
    &\min_{m\in \mathcal{M}} \mathbb{P}\bigg(R^n  \geq  (1-\zeta_3)(1-\zeta_2)(1-\Delta) \\
    &\times(R_1(s)-R_2(s)+\min(R_2(s),R_{\text{key}}(s))\bigg)\geq 1- e^{-n\zeta_4}.\nonumber
\end{align}
Similarly, for \(R_1(s) < (R_2(s)-R_{\text{key}}(s))^+\), we have 
\begin{equation}\label{achievablerate2}
    \mathbb{P}(R^n \geq \min(R_1(s),R_{\text{key}}(s)))\geq 1- e^{-n\zeta_4}.
\end{equation}
Finally, for \((R_2(s)-R_{\text{key}}(s))^+<R_1(s) < R_2(s)\), the probabilistic achievable rate is simply a time-sharing \(\rho\) between (\ref{achievablerate}) and (\ref{achievablerate2}). 

\section*{Acknowledgment}{Parts of this document have received assistance from generative AI tools to aid in the composition; the authors have reviewed and edited the content as needed and take full responsibility for it. This work was supported in part by the National Science Foundation (NSF) under grant 2148400 as part of the Resilient \& Intelligent NextG Systems (RINGS) program.}

\clearpage            
\balance            

\bibliographystyle{IEEEtran}
\bibliography{references}

\newpage
\appendices
\onecolumn
\section*{Appendix A: Proof of Lemma \ref{stopping time lemma}}
For any \(s,s'\in \mathcal{S}, s\neq s'\), we define the pair-wise log-likelihood ratio (LLR)
\begin{equation} \label{pairwiseLLR}
    L^{(1)}_{s,s';t}=\log \frac{\mathbb{P}_s(x_t,y_{1,t})}{\mathbb{P}_{s'}(x_t,y_{1,t})}.
\end{equation}
The probability that stopping time \(\tau\) exceeds \(n\) is given by 

\begin{align} \label{following}
    \mathbb{P}_s(\tau >n) & \leq \sum_{s'\neq s}\mathbb{P}\left(\log \frac{\mathbb{P}_{s}(x^{n-N},y_1^{n-N})}{\mathbb{P}_{s'}(x^{n-N},y_1^{n-N})}< n(\mathbb{D}(W_{Y_1|X,s}||W_{Y_1|X,s'}|P_{X,s})-\epsilon)\right) \nonumber \\
    & =\sum_{s'\neq s}\mathbb{P} \bigg(\sum_{t=1}^{n-N}\bigg[L^{(1)}_{s,s';t}-\mathbb{E}\bigg[L^{(1)}_{s,s';t}|\mathcal{F}_{t-1}\bigg]\bigg] < \nonumber \\
     & \quad \quad \quad \quad \quad n(\mathbb{D}(W_{Y_1|X,s}||W_{Y_1|X,s'}|P_{X,s})-\epsilon)-\sum_{t=1}^{n-N}\mathbb{E}\left[L^{(1)}_{s,s';t}|\mathcal{F}_{t-1}\right]\bigg),
\end{align}

where \(V_{s,s';t'} = \sum_{t=1}^{t'}L^{(1)}_{s,s';t}-\mathbb{E}\left[L^{(1)}_{s,s';t}|\mathcal{F}_{t-1}\right]\) is a martingale adapted to filtration \(\mathcal{F}_{t'}^\infty \triangleq \sigma (X_{t'},Y_{1,t'},M)\). Define the maximum difference 
\begin{equation}
    \phi_{s,s'} = \underset{x,y}{\max}\,\left|\log \frac{W_{Y_1|X,s}(y|x)}{W_{Y_1|X,s'}(y|x)}-\mathbb{D}(W_{Y_1|X,s}||W_{Y_1|X,s'}|P_{X,s})\right|
\end{equation}
for which 
\begin{equation}
    \left|L^{(1)}_{s,s';t}-\mathbb{E}\left[L^{(1)}_{s,s';t}|\mathcal{F}_{t-1}\right]\right| < \phi_{s,s'}
\end{equation}
holds almost surely. Observe that \(\phi_{s,s'}\) is finite Bby Assumption 2). Then, from Lemma \ref{stopping concentration} below and construction of the policy, \(\mathbb{E}\left[L^{(1)}_{s,s';t}|\mathcal{F}_{t-1}\right] = \mathbb{D}(W_{Y_1|X,s}||W_{Y_1|X,s'}|P_{X,s})\) for all \(t >N n^{\beta}\) such that \(t-(\lfloor \frac{t-1}{N}\rfloor)N \leq (1-\Delta)N\), then for sufficiently small \(\beta\)
\begin{align}
    \frac{1}{n}\sum_{t=1}^{n-N} \mathbb{E}\left[L^{(1)}_{s,s';t}|\mathcal{F}_{t-1}\right] & = \frac{1}{n}\left[\sum_{t=1}^T \mathbb{E}\left[L^{(1)}_{s,s';t}|\mathcal{F}_{t-1}\right] + \sum_{t=T+1}^{n-N}\mathbb{E}\left[L^{(1)}_{s,s';t}|\mathcal{F}_{t-1}\right]\right] \nonumber\\
    & \geq\frac{1}{n}\left[\sum_{t=1}^T \mathbb{E}\left[L^{(1)}_{s,s';t}|\mathcal{F}_{t-1}\right] + (1-\Delta)(n-N-T)\mathbb{D}(W_{Y_1|X,s}||W_{Y_1|X,s'}|P_{X,s})\right].
\end{align}
By Lemma \ref{stopping concentration} \(T=o(n)\) and \(N =o(n)\) by definition, we conclude that 
\begin{equation}
    \frac{1}{n}\sum_{t=1}^{n-N} \mathbb{E}\left[L^{(1)}_{s,s';t}|\mathcal{F}_{t-1}\right] \geq (1-\epsilon_3)(1-\Delta)\mathbb{D}(W_{Y_1|X,s}||W_{Y_1|X,s'}|P_{X,s}), 
\end{equation}
for any \(\epsilon_3 >0\). Therefore, 
\begin{equation}
   \psi_{s,s'} = \frac{1}{n}\sum_{t=1}^{n-N} \mathbb{E}\left[L^{(1)}_{s,s';t}|\mathcal{F}_{t-1}\right]-\mathbb{D}(W_{Y_1|X,s}||W_{Y_1|X,s'}|P_{X,s}) +\epsilon
\end{equation}
can be made positive for \(\epsilon>0\), by choosing sufficiently small \(\Delta\) and \(\epsilon_3\). Finally, by applying Azuma's inequality, for \(n\) sufficiently large, we obtain 
\begin{equation}
    \mathbb{P}_s(\tau >n) \leq (|\mathcal{S}|-1) \exp \left(- \min_{s'\neq s}\frac{n \psi_{s,s'}^2}{2\phi^2_{s,s'}}\right).
\end{equation}

\begin{lemma}\label{stopping concentration}
    Define \(T\) to be the earliest time at which \(\hat{s}_T = s\), where \(s\) is the true state, then there exists \(c>0\) and \(c_0>0\) such that 
\begin{equation}
    \mathbb{P}(T>t) \leq c e^{-c_0 t}.
\end{equation}
\end{lemma}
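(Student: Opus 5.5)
The plan is to exploit the universal-code phase. For the first $n^{\beta}N$ channel uses (blocks $k\le n^{\beta}$), the inputs are drawn i.i.d. from $\tilde{P}_X$, independently of the feedback. By construction, $\tilde{P}_X$ maximizes $\min_{s,s'\neq s}\mathbb{D}(W_{Y_1|X,s}\Vert W_{Y_1|X,s'}|P_X)$, and by Assumption 2) this quantity is strictly positive. Define $D_0\triangleq \min_{s'\neq s}\mathbb{D}(W_{Y_1|X,s}\Vert W_{Y_1|X,s'}|\tilde{P}_X)>0$.

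Since the MLE $\psi_1$ is applied at every time $t$, the event $\{T>t\}$ is contained in $\{\hat{s}_t\neq s\}$. It therefore suffices to bound $\mathbb{P}_s(\psi_1(X^t,Y_1^t)\neq s)$. A union bound over the $|\mathcal{S}|-1$ competing hypotheses gives
\begin{equation}
\mathbb{P}_s(\hat s_t\neq s)\le \sum_{s'\neq s}\mathbb{P}_s\Big(\sum_{j=1}^{t} L^{(1)}_{s,s';j}\le 0\Big),\nonumber
\end{equation}
with $L^{(1)}_{s,s';j}$ the pairwise LLR defined in \eqref{pairwiseLLR}.

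For $t\le Nn^{\beta}$, the pairs $(X_j,Y_{1,j})$ are i.i.d. under $\tilde{P}_X W_{Y_1|X,s}$. Each summand then has mean $\mathbb{D}(W_{Y_1|X,s}\Vert W_{Y_1|X,s'}|\tilde{P}_X)\ge D_0$. Each summand is also bounded almost surely by $\max_{x,y}|\log (W_{Y_1|X,s}(y|x)/W_{Y_1|X,s'}(y|x))|<\infty$, which is finite by Assumptions 1)–2). Hoeffding's inequality, or equivalently Azuma's inequality applied to the centered martingale $V_{s,s';t}$ exactly as in Appendix A, then yields $\mathbb{P}_s(\sum_j L^{(1)}_{s,s';j}\le 0)\le e^{-tD_0^2/(2\phi'^2)}$. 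Taking $c=|\mathcal{S}|-1$ and $c_0=D_0^2/(2\phi'^2)$ establishes the claim on this range. One could instead use a Chernoff bound to obtain $c_0$ equal to the Chernoff information, but any positive constant suffices for the lemma.

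For $t> Nn^{\beta}$, I would not redo the analysis. I would use monotonicity: $\{T>t\}\subseteq\{T>Nn^{\beta}\}$, so $\mathbb{P}(T>t)\le c\,e^{-c_0Nn^{\beta}}$. This is only useful up to $t$ of order $Nn^{\beta}$. For larger $t$ I would extend the martingale argument, since after the universal phase the inputs are chosen adaptively with type $P_{X,\hat s}$. The conditional mean of $L^{(1)}_{s,s';j}$ given $\mathcal{F}_{j-1}$ is then $\mathbb{D}(W_{Y_1|X,s}(\cdot|X_j)\Vert W_{Y_1|X,s'}(\cdot|X_j))$, which is at least $\min_x \mathbb{D}(W_{Y_1|X,s}(\cdot|x)\Vert W_{Y_1|X,s'}(\cdot|x))\triangleq D_{\min}>0$ for \emph{every} input, by Assumption 2). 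This is the key observation. It means the drift of the LLR martingale is uniformly positive regardless of the adaptive control, so Azuma's inequality applies for all $t$ with $c_0=\min(D_0,D_{\min})^2/(2\phi'^2)$, where $\phi'$ bounds the centered increments. This argument actually subsumes the first case.

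The main obstacle is the adaptivity after the universal phase: the increments are neither i.i.d. nor identically distributed, and the input at time $j$ depends on past estimates. The uniform positivity of $D_{\min}$ from Assumption 2) together with the bounded increments handles this via Azuma's inequality. The care required is in verifying that $X_j$ is $\mathcal{F}_{j-1}$-measurable up to the encoder's message and OTP randomness. That randomness is independent of $Y_{1,j}$ given $X_j$, so the conditional mean bound still holds. Finally, I would note that the stopping rule can only shorten the horizon. $T$ is defined on the full (unstopped) process, so the bound holds for all $t$.
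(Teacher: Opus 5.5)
Your proof is correct, but it handles the concentration step differently from the paper. Both arguments start from $\{T>t\}\subseteq\{\hat s_t\neq s\}$ and a union bound over $s'\neq s$.

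\textbf{The paper's route.} It applies a Chernoff bound with $\lambda<0$ and writes $\mathbb{E}_s\bigl[e^{\lambda\sum_{k\le t}L^{(1)}_{s,s';k}}\bigr]\le e^{t\max_{s}\max_{s'\neq s}\upsilon_{s,s'}(\lambda)}$. Here $\upsilon_{s,s'}$ is the single-letter log-MGF computed with inputs drawn i.i.d.\ from $P_{X,s}$. That factorization is exact only for i.i.d.\ inputs from $P_{X,s}$. It does not literally cover the block-adaptive scheme, where inputs are codeword letters of type $P_{X,\hat s}$ and $\hat s$ may be wrong.

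\textbf{Your route.} You observe that $\mathbb{E}_s\bigl[L^{(1)}_{s,s';j}\mid\mathcal{F}_{j-1}\bigr]\ge D_{\min}\triangleq\min_x\mathbb{D}\bigl(W_{Y_1|X,s}(\cdot|x)\Vert W_{Y_1|X,s'}(\cdot|x)\bigr)>0$ for any input that depends only on the past and the encoder's randomness. Combined with bounded increments and Azuma, this gives a bound valid for every adaptive policy. The price is a generally weaker constant, $c_0=D_{\min}^2/(2\phi'^2)$, instead of a Chernoff-type exponent.

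The paper's MGF route can be repaired in the same spirit. Bound the conditional MGF uniformly over letters, $\mathbb{E}_s\bigl[e^{\lambda L^{(1)}_{s,s';j}}\mid\mathcal{F}_{j-1}\bigr]\le\max_x\sum_y W_{Y_1|X,s}(y|x)^{1+\lambda}W_{Y_1|X,s'}(y|x)^{-\lambda}<1$ for $\lambda\in(-1,0)$, and iterate the tower property.

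You can drop your preliminary split into the universal phase and the monotonicity step, since the uniform-drift argument covers all $t$ at once, as you note. Also, because your bound is uniform in $t$, summing $\sum_{t'\ge t}ce^{-c_0t'}$ shows the estimate stays correct at all times after $t$, except with probability $O(e^{-c_0t})$. That stronger form is what Appendix A and Appendix C actually rely on when they treat the input type as $P_{X,s}$ for all $t\ge T$.
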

\begin{proof}
This is a well known result by simply applying Chernoff's inequality, which we provide for completion. Based on the log-likelihood ratio, \(\psi_1(y_1^t) \neq s\) whenever

\begin{equation}
   \exists s'\neq s, \log \frac{\mathbb{P}_{s}(y_1^t,x^t)}{\mathbb{P}_{s'}(y_1^t,x^t)} \leq 0.
\end{equation}
Then, 

\begin{align}
    \mathbb{P}_s(T>t) & \leq \mathbb{P}\left(\exists s'\neq s, \log \frac{\mathbb{P}_{s}(y_1^t,x^t)}{\mathbb{P}_{s'}(y_1^t,x^t)} \leq 0\right) \nonumber\\
    & \leq \sum_{s'\neq s} \mathbb{P}\left( \log \frac{\mathbb{P}_{s}(y_1^t,x^t)}{\mathbb{P}_{s'}(y_1^t,x^t)} \leq 0\right) \nonumber\\
    & = \sum_{s'\neq s} \mathbb{P}\left( e^{\lambda\sum_{k=1}^t \log \frac{W_{Y_1|X,s}(y_{1,k}|x_k)}{W_{Y_1|X,s'}(y_{1,k}|x_k)}} \geq 1\right) \nonumber\\
    & \overset{(a)}{\leq} \sum_{s'\neq s} \mathbb{E}_{s}\left( e^{\lambda\sum_{k=1}^t \log \frac{W_{Y_1|X,s}(y_{1,k}|x_k)}{W_{Y_1|X,s'}(y_{1,k}|x_k)}} \right) \nonumber\\
    & \leq (|\mathcal{S}|-1)e^{t \max_{s}\max_{s'\neq s} \upsilon_{s,s'}(\lambda)},
\end{align}
where (a) follows by applying Chernoff's bound. 
For any \(\lambda<0\), 
\begin{align}
    \upsilon_{s,s'}(\lambda)& = \log \mathbb{E}_{s}\left[e^{\lambda\log\frac{W_{Y_1|X,s}(y_{1}|x)}{W_{Y_1|X,s'}(y_{1}|x)}}\right] \nonumber\\
    & = \log \sum_{x,y}P_{X,s}(x) W_{Y_1|X,s}(y_1|x)\left(\frac{W_{Y_1|X,s}(y_1|x)}{W_{Y_1|X,s'}(y_1|x)}\right)^{\lambda}
\end{align}
is the state-dependent moment generating function. Taking \(c = |\mathcal{S}|-1\) and \(c_0 = -\max_{\lambda<0}\max_{s} \max_{s'\neq s}\upsilon_{s,s'}(\lambda) >0\) concludes the proof. 
\end{proof}

\section*{Appendix B: Proof of \ref{main exponent}}

The core idea for proving Tx's achievable exponent is based on the fact that the standard measure concentration argument \cite{nitinawarat2013controlled} applies on a packet-level adaptation. For \(n\) sufficiently large, the probability of detection error can be bounded as 
\begin{equation} \label{Pd1}
    P_{d_1} \leq \underset{s \in \mathcal{S}}{\text{max}}\,\sum_{s'\neq s}\mathbb{P}_s\left(\sum_{t=1}^\tau L^{(1)}_{s,s';t} \leq n (\mathbb{D}(W_{Y_1|X,s}||W_{Y_1|X,s'}|P_{X,s})-\epsilon)\right). 
\end{equation}
But the pairwise LLR can be expressed as 
 \begin{align}
     \frac{1}{n} \sum_{t=1}^\tau L^{(1)}_{s,s';t} 
      =  &\frac{1}{n} \sum_{t=1}^\tau \left\{ L^{(1)}_{s,s';t} - \mathbb{E}[L^{(1)}_{s,s';t} |\mathcal{F}_{t-1}] \right\} \nonumber\\
     +& \frac{1}{n} \sum_{t=1}^\tau \left \{\mathbb{E}[L^{(1)}_{s,s';t} |\mathcal{F}_{t-1}] - \mathbb{D}(W_{Y_1|X,s}||W_{Y_1|X,s'}|P_{X,s})\right\} \nonumber \\
     + &\frac{\tau}{n}\mathbb{D}(W_{Y_1|X,s}||W_{Y_1|X,s'}|P_{X,s}).
    \end{align}
The two terms in \(\{\}\) concentrate to zero exponentially fast \cite{chernoff1992sequential} \cite{nitinawarat2013controlled}, which implies \(\mathbb{E}[\frac{1}{n} \sum_{t=1}^\tau L^{(1)}_{s,s';t}] \rightarrow \frac{\tau}{n}\mathbb{D}(W_{Y_1|X,s}||W_{Y_1|X,s'}|P_{X,s})\) for sufficiently large \(n\). Applying Chernoff's bound to (\ref{Pd1}) and Lemma 3 gives 
\begin{align}
    P_{d_1} \leq &\underset{s \in \mathcal{S}}{\text{max}}\,\sum_{s'\neq s} \exp(-n(\mathbb{D}(W_{Y_1|X,s}||W_{Y_1|X,s'}|P_{X,s})-\epsilon)) \nonumber\\
    \leq &\underset{s \in \mathcal{S}}{\text{max}}\,(|\mathcal{S}|-1)\exp(-n (\min_{s'\neq s}\,\mathbb{D}(W_{Y_1|X,s}||W_{Y_1|X,s'}|P_{X,s})-\epsilon)).
\end{align}

\section*{Appendix C: Proof of Lemma \ref{Eve exponent lemma}}

We first focus on the resolvability region in which \(R_1(s) +R_{\text{key}}(s)> R_2(s)\). 
Define the LLR at Eve as
\begin{equation}
    L^{(2)}_{s,s';t}=\log \frac{\mathbb{P}_s(y_{2,t})}{\mathbb{P}_{s'}(y_{2,t})}.
\end{equation}
Also define the set \(\mathcal{T}_s\triangleq \{t \in [1,n]: \{t\geq T\} \cap \{t-(\lfloor \frac{t-1}{N}\rfloor)N \leq (1-\Delta)N\}\}\). Then Eve's detection error probability between \(s\) and \(s'\) is given by the optimal LLR test as
\begin{align}
    &\mathbb{P}_s(f^{s,s'}_{ML}(Y_2^{\tau}) \neq s |\tau = n) \nonumber\\
    = & \mathbb{P}_s\left(\sum_{t=1}^{n} L^{(2)}_{s,s';t} \leq 0 \right) \nonumber\\
    = & \mathbb{P}_s\left(\sum_{t\in \mathcal{T}_s} L^{(2)}_{s,s';t} + \sum_{t\notin \mathcal{T}_s} L^{(2)}_{s,s';t}\leq 0 \right) \nonumber\\
    = & \mathbb{P}_s\left(\frac{1}{|\mathcal{T}_s|}\sum_{t\in \mathcal{T}_s} L^{(2)}_{s,s';t} \leq  \frac{1}{|\mathcal{T}_s|}\sum_{t\notin \mathcal{T}_s} L^{(2)}_{s',s;t} \right).
\end{align}
Since the messages are protected with keys, Eve's hypothesis testing is based on the mixture 
\begin{align}  \label{nonchernoff}
&\frac{-1}{n}\log\mathbb{P}_s(f^{s,s'}_{ML}(Y_2^{\tau}) \neq s |\tau = n) \nonumber\\
    \leq& \frac{-1}{n}\log\mathbb{P}_s\left(\frac{1}{|\mathcal{T}_s|}\sum_{t\in \mathcal{T}_s} L^{(2)}_{s,s';t} \leq  0 \right) \nonumber \\
    = & \frac{-1}{n}\log\mathbb{P}_s\left(\frac{1}{|\mathcal{T}_s|}\sum_{t\in \mathcal{T}_s} L^{(2)}_{s',s;t} \geq  0 \right) \nonumber \\
    \overset{(a)}{\leq} &   \frac{-1}{n}\log \mathbb{E} \left[e^{\lambda \frac{1}{|\mathcal{T}_s|} \sum_{t\in \mathcal{T}_s} L^{(2)}_{s',s;t}}\right] \nonumber \\
    = &  \sup_{\lambda >0}\, \frac{-1}{n}\log \mathbb{E}\left[\exp\left( \log \prod_{t\in \mathcal{T}_s}\frac{\mathbb{P}_{s'}(y_{2,t})}{\mathbb{P}_{s}(y_{2,t})}\right)^{\frac{\lambda}{|\mathcal{T}_s|}}\right] \nonumber\\
    =&  \sup_{\lambda >0}\, \frac{-1}{n}\Bigg(\frac{1}{|\mathcal{T}_s|}\sum_{t\in \mathcal{T}_s}\log\mathbb{E}_{y_2 \sim  \frac{1}{2^{M_\tau}}\sum_{m=1}^{2^{M_\tau}} W_{Y_{2}|X,s'}(y_{2,t}|x_t(m))}\Bigg[\frac{\sum_{m=1}^{2^{M_\tau}} W_{Y_{2}|X,s'}(y_{2,t}|x_t(m))}{ \sum_{m=1}^{2^{M_\tau}} W_{Y_{2}|X,s}(y_{2,t}|x_t(m))}^\lambda \Bigg]\Bigg) \nonumber\\
    \overset{(b)}{=}& \sup_{\lambda \in [0,1]}\, \frac{-1}{n} \log \bigg( (1-\Delta)(1-\frac{n - |\mathcal{T}_s|}{n})\sum_{y_2^n} \bigg(P_{Y^n|\mathcal{C}^n_{2^{M_n}}(P_{X,s})}(y_2^n)\bigg)^\lambda\bigg(P_{Y^n|\mathcal{C}^n_{2^{M_n}}(P_{X,s'})}(y_2^n)\bigg)^{1-\lambda}\bigg)\nonumber\\
    \overset{(c)}{=}& \frac{-1}{n}\log \mathbb{C}^{M_n}+\delta_6,
\end{align}
for \(\delta_6 >0\), where \(\mathbb{C}^{M_n}=2^{-n \mathbb{C}(P_{Y^n|\mathcal{C}^n_{2^{M_n}}(P_{X,s})}(y_2^n)||P_{Y^n|\mathcal{C}^n_{2^{M_n}}(P_{X,s'})}(y_2^n))}\) is the non-log Chernoff information. \((a)\) comes from Chernoff inequality and \((b)\) is supported by the fact that for any \(s \in \mathcal{S}, t\in \mathcal{T}_s\), codewords \(x(m)\) are drawn i.i.d from the distribution \(P_{X,s}\). 
\((c)\) is based on \(|\mathcal{T}_s| = \Theta(n), \Delta >0\) and Lemma \ref{stopping time lemma}.
Moreover, when \(R_1(s) +R_{\text{key}}(s)> R_2(s)\), by soft covering results
\begin{equation}
    \frac{1}{2}\Bigg|\Bigg|\frac{1}{2^{M_n}} \sum_{m=1}^{2^{M_n}} W_{Y_{2}|X,s}^{\otimes n}(y^n_2|x^n(m))-\sum_{x^n} P_{X,s}(x^n) W_{Y_2|X,s}(y^n_2|x^n)\Bigg|\Bigg|_1 \leq e^{-n E_{\textnormal{SC}}},
\end{equation}
for \(E_{\textnormal{SC}}>0\) given in Lemma \ref{softcovering}. Then, asymptotically we have
\begin{align}
    \mathbb{C}^{M_n} &\geq 1-\mathbb{V}\bigg(\frac{1}{2^{M_n}} \sum_{m=1}^{2^{M_n}} W_{Y_{2}|X,s}^{\otimes n}(y^n_2|x^n(m)),\frac{1}{2^{M_n}} \sum_{m=1}^{2^{M_n}} W_{Y_{2}|X,s'}^{\otimes n}(y^n_2|x^n(m))\bigg) \\
    & \geq 1-\frac{1}{2}\Bigg|\Bigg|\frac{1}{2^{M_n}} \sum_{m=1}^{2^{M_n}} W_{Y_{2}|X,s}^{\otimes n}(y^n_2|x^n(m))-\sum_{x^n} P_{X,s}(x^n) W_{Y_2|X,s}(y^n_2|x^n)\Bigg|\Bigg|_1 \nonumber\\
     & \quad\quad-\frac{1}{2}\Bigg|\Bigg|\frac{1}{2^{M_n}} \sum_{m=1}^{2^{M_n}} W_{Y_{2}|X,s}^{\otimes n}(y^n_2|x^n(m))-\sum_{x^n} P_{X,s'}(x^n) W_{Y_2|X,s'}(y^n_2|x^n)\Bigg|\Bigg|_1\nonumber\\
    & \quad\quad- \frac{1}{2}\Bigg|\Bigg|\sum_{x^n}P_{X,s}(x^n) W_{Y_2|X,s}(y^n_2|x^n)-\sum_{x^n} P_{X,s'}(x^n) W_{Y_2|X,s'}(y^n_2|x^n)\Bigg|\Bigg|_1\nonumber\\
    & \geq (1-\mu)e^{-n\mathbb{C}(P_{X,s}\circ W_{Y|X,s}||P_{X,s'}\circ W_{Y|X,s'})}-2e^{-\min (E_{\textnormal{SC}}(P_{X,s},W_{Y|X,s}),E_{\textnormal{SC}}(P_{X,s'},W_{Y|X,s'})n},
\end{align}
for \(\mu >0\), where \(E_{\textnormal{SC}} \) is the soft covering exponent given in lemma \ref{softcovering}. Finally, we have for arbitrarily small \(\mu\)
\begin{align}
    \underset{n\rightarrow \infty}{\lim} -\frac{1}{n} \log P_{d_2} &\leq \mathbb{C}(P_{X,s}\circ W_{Y|X,s}||P_{X,s'}\circ W_{Y|X,s'})\nonumber \\&-\frac{1}{n}\log\bigg(1-c_3 \frac{e^{-n\min (E_{\textnormal{SC}}(P_{X,s},W_{Y|X,s}),E_{\textnormal{SC}}(P_{X,s'},W_{Y|X,s'}))}}{e^{-n \mathbb{C}(P_{X,s}\circ W_{Y|X,s}||P_{X,s'}\circ W_{Y|X,s'})}}\bigg).
\end{align}
We next turn our attention to the situation in which \(R_1(s) < (R_2(s)-R_{\text{key}}(s))^+\). The difficulty in analyzing this regime is that there exists no tight upper bound on the Chernoff information between mixtures in (\ref{nonchernoff}) while preserving the dependency on codebook size. Since the distribution at Eve is no longer resolvable, the additional structure induced by the codebook mixtures may enable Eve to infer the type of the codebook \(P_{X,s}\). In this case, Eve may recover the codeword types and obtain state information from both public policy and observation. Therefore, Tx uses \(P_{X,s} = P_X,  \forall s\in \mathcal{S}\) and retains the advantage via control of the stopping time. This setting has been analyzed in \cite{chang2021evasive}, and we adapt to our block-based adaptation to show that Eve's detection exponent is (details are omitted for brevity).

\begin{equation}
    \underset{n\rightarrow \infty}{\lim} -\frac{1}{n} \log P_{d_2} \leq \mathbb{C}(W_{Y|X,s}|| W_{Y|X,s'}|P_{X,s})
\end{equation}

\begin{lemma}\label{softcovering}
Let \(P_X\) be a fixed \(n\)-type and \(\mathcal{C}^n_M(P_X)\) be a constant composition codebook. For any non-degenerate discrete memoryless channel \(W_{Y^n|X^n}: \mathcal{X}^n \rightarrow \mathcal{Y}^n\) and rate \(R \geq \mathbb{I}(P_X,W_{Y|X})\), let \(M = \lceil \exp (nR)\rceil\), then 
\begin{equation}
   \bigg|\bigg|P_{Y^n|\mathcal{C}^n_M(P_X)}(y^n)-P^n_{X}\circ W^n_{Y|X}\bigg|\bigg|_1  \leq e^{-n E_{\textnormal{SC}}(P_X,W_{Y|X})},
\end{equation}
where 
\begin{equation}
    E_{\textnormal{SC}}(P_X,W_{Y|X}) = \min_{Q_{Y|X} \in W_{Y|X}} \bigg\{\mathbb{D}(P_X Q_{Y|X} ||P_X W_{Y|X})+\frac{1}{2}[R - \mathbb{D}(P_X Q_{Y|X}||P_X Q_Y)]^+\bigg\},
\end{equation}
and \(Q_Y = P_X \circ Q_{Y|X}\) is the marginal on \(Y\), for \(n \rightarrow \infty\).
\end{lemma}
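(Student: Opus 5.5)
The plan is to prove the bound in expectation over a random constant-composition codebook and then pass to a fixed codebook. I would draw the codewords $x^n(1),\dots,x^n(M)$ i.i.d.\ uniformly from $\mathcal{T}_{P_X}^n$. For the comparison to be exponentially tight, the reference measure ``$P_X^n\circ W^n_{Y|X}$'' must be read as the output of the uniform distribution on $\mathcal{T}^n_{P_X}$; this is the convention of \cite{yagli2019exact} for constant-composition codes. I will make this reading explicit at the start. The reason is that the i.i.d.\ product differs from the type-class-uniform output by only a polynomial factor, not an exponential one, so with the literal i.i.d.\ reference the bound could fail. Under this convention, $\mathbb{E}_{\mathcal{C}}[P_{Y^n|\mathcal{C}^n_M(P_X)}]$ equals the reference distribution, and the quantity to control is
\begin{equation*}
\mathbb{E}_{\mathcal{C}}\sum_{y^n}\Big|\tfrac{1}{M}\textstyle\sum_{m} W^n(y^n|x^n(m))-\mathbb{E}[W^n(y^n|X^n)]\Big| .
\end{equation*}

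The key step is a decomposition by conditional types. For each $y^n$ and each codeword, I sort the pair $(x^n(m),y^n)$ by its joint type $P_X\times Q_{Y|X}$. There are only polynomially many such types, so a union bound over them costs nothing in the exponent. On the shell of type $Q_{Y|X}$, the value $W^n(y^n|x^n)$ is constant and equals $\exp(-n[\mathbb{H}(Q_{Y|X}|P_X)+\mathbb{D}(P_XQ_{Y|X}\Vert P_XW_{Y|X})])$. The number of codewords landing in the shell around $y^n$ is a binomial variable with mean $M\exp(-n\,\mathbb{I}(P_X;Q_{Y|X}))$ up to polynomial factors, where $\mathbb{I}(P_X;Q_{Y|X})=\mathbb{D}(P_XQ_{Y|X}\Vert P_XQ_Y)$. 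I then treat the two regimes separately:
\begin{enumerate}
\item \textbf{Unresolved shells, $\mathbb{I}(P_X;Q_{Y|X})\ge R$.} The shell is typically empty or sparse, so I bound the absolute deviation by twice the mean. Summing over $y^n$ in $\mathcal{T}_{Q_Y}$ gives the $W$-probability of the shell, which is $\doteq e^{-n\mathbb{D}(P_XQ_{Y|X}\Vert P_XW_{Y|X})}$.
\item \textbf{Resolved shells, $\mathbb{I}(P_X;Q_{Y|X})<R$.} Here I use $\mathbb{E}|Z-\mathbb{E}Z|\le\sqrt{\mathrm{Var}(Z)}$ for the binomial count. The relative fluctuation is $e^{-n(R-\mathbb{I})/2}$, so this shell contributes $e^{-n[\mathbb{D}(P_XQ_{Y|X}\Vert P_XW_{Y|X})+\frac12(R-\mathbb{I})]}$.
\end{enumerate}
Taking the worst shell yields exactly $\min_{Q_{Y|X}}\{\mathbb{D}(P_XQ_{Y|X}\Vert P_XW_{Y|X})+\frac12[R-\mathbb{I}(P_X;Q_{Y|X})]^+\}$, up to polynomial factors that vanish on the exponential scale as $n\to\infty$.

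To finish, I pass from the expectation to a fixed codebook. Markov's inequality shows that a codebook exists (indeed most codebooks) achieving the bound with an arbitrarily small exponent loss. If a probability-one statement is wanted, a McDiarmid concentration step on the total-variation distance handles it, since changing a single codeword changes the distance by at most $2/M$. I expect the main obstacle to be step 2, the resolved shells. There the variance must be computed carefully over $y^n$ in the shell, with sampling from the type class (which is without replacement in structure, though the draws are i.i.d.\ over codewords), so that the $\frac12$ factor appears without extra exponential loss. The sum over $y^n$ must also be interchanged with the square root using Jensen's inequality within each $Q_Y$-type class, which is legitimate because the summand is constant across the class up to permutation. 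Non-degeneracy of $W_{Y|X}$ keeps every shell exponent finite, so the minimum is attained.
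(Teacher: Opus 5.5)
The paper does not prove this lemma. It is stated in Appendix C and imported from \cite{yagli2019exact}, so your argument replaces a citation rather than a proof in the text, and it is correct. Write $\bar P_{Y^n}$ for the output of the uniform input on $\mathcal{T}^n_{P_X}$. For $y^n$ of type $Q_Y$, the number $N_V(y^n)$ of codewords in the conditional-type shell $V$ is exactly $\mathrm{Bin}(M,p_V)$ with $p_V\doteq e^{-n\mathbb{I}(P_X,V)}$, and $\mathbb{E}|N_V-Mp_V|\le\min\{2Mp_V,\sqrt{Mp_V}\}$. Summing the shell contribution over $\mathcal{T}^n_{Q_Y}$ gives, up to polynomial factors, $e^{-n\mathbb{D}(P_XV\Vert P_XW_{Y|X})}\min\{2,(Mp_V)^{-1/2}\}$. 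The triangle inequality over the polynomially many $V$ then yields the stated exponent. A minimum over $n$-types is never below the minimum over all $Q_{Y|X}$, so no continuity argument is needed.

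Compared with the citation: the cited result additionally gives the matching converse (exactness), which neither the lemma nor the paper uses. Your derivation instead makes the hypotheses transparent. The bound holds for every $R$, with the exponent positive exactly when $R>\mathbb{I}(P_X,W_{Y|X})$, and it gives a statement about a fixed codebook. Your McDiarmid step in fact yields doubly exponential concentration: $2E_{\textnormal{SC}}\le[R-\mathbb{I}(P_X,W_{Y|X})]^+\le R$, so $Me^{-2n(E_{\textnormal{SC}}-\delta)}$ grows exponentially.

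The obstacles you anticipate in your last paragraph do not arise. The codewords are i.i.d., so $N_V(y^n)$ is exactly binomial, and its law is the same for every $y^n$ in a type class, so no Jensen interchange is needed.

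Your change of reference measure is necessary, but not for the reason you give. A one-sided pointwise bound $\bar P_{Y^n}\le(n+1)^{|\mathcal{X}|}P_Y^{\otimes n}$ says nothing about total variation. The real reason is that the literal statement fails for \emph{every} constant-composition codebook. For any $x^n\in\mathcal{T}^n_{P_X}$, the number of occurrences of a symbol $y$ in $Y^n$ is a sum of independent Bernoulli variables whose law depends only on $P_X$. This count therefore has the same law under $P_{Y^n|\mathcal{C}^n_M(P_X)}$ as under $\bar P_{Y^n}$, and its variance is smaller than under $P_Y^{\otimes n}$ by $n\,\mathrm{Var}_{P_X}(W_{Y|X}(y|X))$. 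By data processing, the distance to $P_Y^{\otimes n}$ stays bounded away from zero whenever $\mathbb{I}(P_X,W_{Y|X})>0$.

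This matters for how the lemma is used in Appendix C. There the codewords are drawn i.i.d.\ and the reference is the product measure $\sum_{x^n}P_{X,s}(x^n)W_{Y_2|X,s}(y^n|x^n)$. For that ensemble, your shell argument runs over joint types and gives $\min_{Q_{XY}}\{\mathbb{D}(Q_{XY}\Vert P_{X,s}W_{Y_2|X,s})+\frac{1}{2}[R-\mathbb{D}(Q_{XY}\Vert P_{X,s}Q_Y)]^+\}$, which is never larger than the constant-composition formula stated here.
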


\section*{Appendix D:  Reliability Analysis}

Using the union bound, the decoding error probability is given by 
\begin{align} \label{reliability}
    P_c \leq &\sum_{k=n^{\beta}+1}^{\tau/N} \mathbb{P}(g_1(Y_{1,(k-\Delta)N+1:kN})\neq \hat{S}_{(k-1)N}|\hat{S}_{(k-1)N} = s) \nonumber \\
    + & \sum_{k=n^{\beta}+1}^{\tau/N}\mathbb{P}(g_1(Y_{1,(k-1)N+1:(k-\Delta)N})\neq M[M_{(k-1)N};M_{kN}]|M=m,\hat{S}_{(k-1)N}=s) \nonumber\\
    + & \sum_{k=n^{\beta}+1}^{\tau/N} \mathbb{P}(\psi_1(Y_{1}^{(k-1)N},X^{(k-1)N})\neq S|S=s). 
\end{align}
By Lemma~\ref{stopping concentration}, the third term in (\ref{reliability}) vanishes. The first two terms in (\ref{reliability}) are the decoding error probability of the estimated state and message, respectively. For \(R_1(s)\geq R_2(s)\), decoding reliability is guaranteed by standard wiretap code \cite{bloch2011physical}. For \(R_1(s)< R_2(s)\), vanishing decoding error probability comes from crypto lemma \cite{bloch2011physical} and constant composition code results \cite{csiszar2011information}.

\end{document}